\newtheorem{theorem}{Theorem}
\newtheorem{lemma}{Lemma}
\newtheorem{corollary}{Corollary}
\newtheorem{remark}{Remark}  
\newtheorem{proposition}{Proposition}
\def\ScaleIfNeeded{%
\ifdim\Gin@nat@width>\linewidth \linewidth \else \Gin@nat@width
\fi } \makeatother
\begin{document}

\title{\Huge{Simultaneously Transmitting and Reflecting Reconfigurable Intelligent Surface Assisted NOMA Networks}}

\author{Xinwei~Yue,~\IEEEmembership{Senior Member~IEEE}, Jin Xie, Yuanwei\ Liu,~\IEEEmembership{Senior Member~IEEE}, Zhihao~Han,  Rongke\ Liu,~\IEEEmembership{Senior Member~IEEE} and  Zhiguo~Ding,~\IEEEmembership{Fellow, IEEE}

\thanks{X. Yue and J. Xie are with the Key Laboratory of Information and Communication Systems, Ministry of Information Industry and also with the Key Laboratory of Modern Measurement $\&$ Control Technology, Ministry of Education, Beijing Information Science and Technology University, Beijing 100101, China (email: \{xinwei.yue, jin.xie\}@bistu.edu.cn).}
\thanks{Y. Liu is with the School of Electronic Engineering and Computer Science, Queen Mary University of London, London E1 4NS, U.K. (email: yuanwei.liu@qmul.ac.uk).}
\thanks{Z. Han and R. Liu are with the School of Electronic and Information Engineering, Beihang University, Beijing 100191, China (email: \{hzh$\_$95, rongke$\_$liu\}@buaa.edu.cn).}
\thanks{Z. Ding is with the Department of Electrical Engineering, Princeton University, Princeton, USA and also with the School of Electrical and Electronic Engineering, the University of Manchester, Manchester, U.K. (e-mail: zhiguo.ding@manchester.ac.uk).}
}

\maketitle

\begin{abstract}
Simultaneously transmitting/refracting and reflecting reconfigurable intelligent surface (STAR-RIS) has been introduced to achieve full coverage area. This paper investigate the performance of STAR-RIS assisted non-orthogonal multiple access (NOMA) networks over Rician fading channels, where the incidence signals sent by base station are reflected and transmitted to the nearby user and distant user, respectively. To evaluate the performance of STAR-RIS-NOMA networks, we derive new approximate expressions of outage probability and ergodic rate for a pair of users, in which the imperfect successive interference cancellation (ipSIC) and perfect SIC (pSIC) schemes are taken into consideration. Based on the asymptotic expressions, the diversity orders of the nearby user with ipSIC/pSIC and distant user are achieved carefully. The high signal-to-noise ratio slopes of ergodic rates for nearby user with pSIC and distant user are equal to  $one$ and $zero$, respectively. In addition, the system throughput of STAR-RIS-NOMA is discussed in delay-limited and delay-tolerant modes. Simulation results are provided to verify the accuracy of the theoretical analyses and demonstrate that: 1)  The outage probability of STAR-RIS-NOMA outperforms that of  STAR-RIS assisted orthogonal multiple access (OMA) and conventional cooperative communication systems; 2) With the increasing of reflecting elements $K$  and Rician factor $\kappa $, the STAR-RIS-NOMA networks are capable of attaining the enhanced performance; and 3) The ergodic rates of STAR-RIS-NOMA are superior to that of STAR-RIS-OMA.
\end{abstract}

\begin{keywords}
Non-orthogonal multiple access, reconfigurable Intelligent surface, simultaneous transmitting and reflecting, outage probability, ergodic rate.
\end{keywords}

\section{Introduction}
With the commercial deployment of the fifth-generation (5G) communication systems, the key technologies of physical layer have begun to be pre-researched for sixth-generation (6G) communication networks. Compared with 5G systems, the aims of 6G networks are to meet the requirements of dynamical businesses and provide the enhanced spectral/energy efficiency, global coverage and better intelligence levels, etc \cite{YouTowards6G}. Non-orthogonal multiple access (NOMA) with the characteristics of high spectrum efficiency and supporting giant connections has been viewed as a promising multiple access candidate for 6G networks \cite{Khan2020NOMA}. As clearly expressed that the NOMA scheme has ability to achieve the boundary of capacity region and better fairness with respect to orthogonal multiple access (OMA) \cite{Tse2005,Ding2017Mag}, where both superposition coding and successive interference cancellation (SIC) are employed at the transmitters and receivers, respectively.

Integrating NOMA with other physical layer techniques have been discussed extensively based on different application scenarios and requirements \cite{Yuan2021NOMA6G,Yuanwei2021NOMA}. The concept of cooperative NOMA communications was proposed in \cite{Ding2014Cooperative}, where the users with better conditions are selected as relays to guarantee the service quality of cell edge users. Furthermore, the authors of  \cite{Zhang2017FDNOMA,Yue8026173} investigated the performance of full/half-duplex (FD/HD) cooperative NOMA systems in terms of outage probability and ergodic rate. With the emphasis on secure communications, the security performance of NOMA networks was evaluated in \cite{Yue2019UnifiedPLS}, where the internal and  external eavesdropping scenarios are examined carefully. Applying NOMA to random access, the author in \cite{Rana2019FreeNOMA} pointed out that the NOMA based grant-free scheme can potentially support giant connectivity by considering latency and reliability. Additionally, the NOMA assisted semi-grant-free transmission scheme was analyzed in \cite{Ding2019Semi}, which is regarded as a compromise between grant-free transmission and grant based schemes.
To enhance the spectrum usage of an unmanned aerial vehicle (UAV), the authors of \cite{Li2020UAVNOMA} surveyed the performance of coverage probability and achievable rate for UAV-enabled NOMA communications. In \cite{Yue2020SatelliteNOMA}, the authors introduced the use of NOMA to satellite networks and evaluated the outage behaviors of terrestrial users with order statistics. Recently, the backscatter communications aided NOMA networks were studied in  \cite{Ding2021BACNOMA}, which has the ability to effectively support ultra-massive machine type scenarios.

Simultaneously transmitting/refracting and reflecting reconfigurable intelligent surface (STAR-RIS)  has given rise to the heated discussions in both academia and industry communities \cite{Liu2021STAR360,Mu2021STARWireless,Song2021Reflective}. The pivotal thought of STAR-RIS is to refract and reflect incident signals towards the users on the same side and opposite side of source, respectively. The employment of STAR-RIS has the following superiorities relative to reflecting-only RIS: i) STAR-RIS is capable of providing the flexible degree-of-freedom to propagate signals; ii) The coverage of STAR-RIS can be extended to the entire space ; and iii) STAR-RIS is usually designed to be optically transparent, which can be compatible with the current building structures. From the perspectives of hardware and system design, the authors of \cite{Liu2021STAR360} introduced the fundamental signal model of STAR-RIS, where the practical protocols of energy splitting, mode switching and time switching are proposed carefully. On the basis of these, the authors in \cite{Liu2021STAR} compared the differences between the conventional reflective-only RIS and STAR-RIS. To meet diverse requirements, the authors of \cite{Song2021Reflective} studied the system capacity of three types of RIS i.e., reflective, transmissive and hybrid types.
In \cite{Basar2021RIS}, the authors focused practicality on active, transmitter and transmissive-reflective types of RIS by evaluating their advantages and weaknesses relative to reflective RIS designs. As a further advance, the related works on RIS and RIS-NOMA  are surveyed exhaustively in the following two paragraphs.

\subsubsection{Related works on RIS}
The RIS-assisted wireless communications have sparked a lot of attention \cite{Hu2018Surfaces,Linglong2020RIS,Pan2021RIS6G}. In \cite{Liu2020RISOpportunities}, the authors researched the channel performance of RIS-assisted networks by categorizing the RIS illuminated space. With the goal of maximizing energy efficiency, the authors of \cite{Huang2019RIS} introduced the low complexity approaches by jointly designing both the transmit power allocation and phase shifts of reflecting elements. In \cite{Chien2021RISCorrelated}, a statistical descriptions of outage probability, ergodic rate  and symbol error rate were outlined  for RIS-assisted wireless communications over Rayleigh fading channels. Under Nakagami-$m$ channels conditions, the authors of \cite{Ibrahim2021RISCoverageNakagami} analyzed the performance of  coverage probability for RIS-assisted communication systems by exploiting moment generation functions. To shed light on the impact of the line-of-sight (LoS) component, the authors in \cite{Zhong9146875IRSRicanfading,Salhab2021IRSRician} evaluated the outage performance, ergodic capacity and average symbol error probability of RIS-assisted wireless works over Rician fading channels. Except the above contributions, several application scenarios i.e., applying RIS to massive device-to-device communications and facilitating simultaneous wireless and power transfer were highlighted in \cite{WuTowards2019}. Two-way communications between users aided by RIS were surveyed in \cite{Pradhan2020RIStwoway}, where the reciprocal or non-reciprocal channels are taken into account. From the viewpoint of security, the authors of \cite{Yang2020RISSecrecy} studied the secrecy outage behaviors and average secrecy capacity of RIS-assisted networks by using stochastic geometry.

\subsubsection{Related works on RIS-NOMA}
Until now, the RIS-assisted NOMA networks have been discussed from the perspective of the performance analyses  \cite{Ding2020IRSDesign,Yue2020IRSNOMA,Ding2020Shifting}. A simple design of RIS-NOMA transmission scheme was proposed in \cite{Ding2020IRSDesign}, where the increasing number of reflection elements can effectively reduce the outage probability. Inspired by this work, the authors of \cite{Yue2020IRSNOMA} investigated the outage probability, ergodic rate and energy efficiency of RIS-NOMA with perfect SIC (pSIC) and imperfect SIC (ipSIC). In \cite{Ding2020Shifting}, the impact of coherent/random phase shifting on the outage performance was examined for RIS-NOMA networks.
As a further advance, the outage probability and ergodic rate of prioritized user for RIS-NOMA were studied in \cite{Hou2019IRS} by designing the passive beamforming weights.
Given the users' rate, the authors of \cite{ZhengIRSUserpairing,Hong2020RISNOMAPower} surveyed the transmit power minimization problems with discrete phase shifts for RIS-aided NOMA and OMA.
Aims to mitigate the transmission interference, the authors made the use of a novel NOMA solution with RIS partitioning \cite{Basar2020IRSNOMA}, where the fairness among users can be maximized in detail. In  \cite{Lina2020IRSNOMASEP}, the pairwise error probability and phase shift designing for RIS-NOMA networks were investigated by employing the ipSIC and group-based SIC schemes.
According to whether there is a direct link between the base station (BS) and users, the authors of \cite{Cheng2020MultipleIRSNOMA} analyzed the outage behaviors of multiple RISs-assisted NOMA networks with discrete phase shifting. Moreover, the ergodic rate performance of RIS-aided uplink and downlink NOMA networks was surveyed in \cite{Cheng2021UplinkdownlinkNOMARIS}, which revealed the superiority of the RIS over full-duplex decode-and-forward (DF) relaying. Additionally, the authors of \cite{Xidong2020RISNOMA} maximized the sum rate of RIS-NOMA networks by jointly optimizing the active at the BS and passive beamforming at the RIS. In \cite{Hong2021TwoCellIRS}, the phase shifting and power allocation of RIS-aided two-cell NOMA networks was studied by invoking the joint detection.

\subsection{Motivation and Contributions}
As previously mentioned above, the existing research contributions assume RIS to be operated in the reflection mode, where the destination is only located on the same side of source. This geographical restriction may not always be satisfied in practical applications, and also restraints the effectiveness and agility of RIS. However, the STAR-RIS can refract and reflect the incident signals to the users located at different sides of the surface, which is capable of supplying the full-space coverage of smart radio environments.
 To broaden the applications of STAR-RIS, the authors in \cite{Zhang2021STARNOMA} investigated the outage behaviors of users with pSIC for STAR-RIS assisted NOMA networks. This assumption of pSIC might not be valid at receiver in practical scenarios, since there still exist several potential implementation issues by using SIC (i.e., complexity scaling and error propagation).
To the best of our knowledge, the performance of STAR-RIS-NOMA with ipSIC/pSIC over Rician fading channels is not researched yet. More specifically, we investigate the performance of a pair of users i.e., the nearby user $n$ and distant user $m$ for STAR-RIS-NOMA networks in terms of outage probability and ergodic rate. The direct communication link from the BS to nearby user are taken into account in detail.
Additionally, the outage probability and ergodic rate of STAR-RIS-OMA are also evaluated seriously. According to the aforementioned explanations, the primary contributions of this manuscript are summarized as follows:
\begin{enumerate}
  \item We derive the approximate expressions of outage probability for user $n$ with ipSIC/pSIC and user $m$ over Rician fading channels. Based on Laplace transform and convolution theorem, we further calculate the asymptotic outage probability and then obtain the diversity orders of user $n$ with ipSIC/pSIC and user $m$. We observe that the diversity orders of user $n$ with pSIC and user $m$ are related to the configure elements. We also derive the approximate expressions of outage probability for STAR-RIS-OMA.
    \item  We compare the outage behaviors of user $n$ with ipSIC/pSIC and user $m$ for STAR-RIS-NOMA with STAR-RIS-OMA. We further confirm that the outage probability of  STAR-RIS-NOMA with pSIC is superior to that of STAR-RIS-OMA and conventional cooperative communication systems. As the reconfigurable elements $K$  and Rician factor $\kappa $ increases, the STAR-RIS-NOMA networks is able to achieve the enhanced outage performance.
      \item  We derive the asymptotic expressions of ergodic rate for user $n$ with pSIC and user $m$ in STAR-RIS-NOMA networks. An upper bound for ergodic rate of user $n$ with pSIC is provided to approximate the exact expression. Based on analytical results, the high signal-to-noise ratio (SNR) slopes of ergodic rate for user $n$ and user $m$ are achieved. We confirm that the ergodic rate of STAR-RIS-NOMA is superior to that of STAR-RIS-OMA.
         \item  We evaluate the system throughput of STAR-RIS-NOMA networks in both delay-limited and delay-tolerant modes. In delay-limited mode, the  system throughput of STAR-RIS-NOMA with pSIC are superior to that of STAR-RIS-OMA and conventional cooperative communication systems at high SNRs.  In delay-tolerant mode, the  system throughput of STAR-RIS-NOMA networks with pSIC outperforms that of STAR-RIS-NOMA with ipSIC and STAR-RIS-OMA.
\end{enumerate}

\subsection{Organization and Notations}
The remainder of this paper is organized as follows.   In Section \ref{Network Model}, the system model of STAR-RIS-NOMA networks is introduced. The outage behaviors of STAR-RIS-NOMA are evaluated in Section \ref{Outage Probability}. More specially, the approximate expressions of outage probability for user $n$ and user $m$ are provided. The ergodic rate of user $n$ and user $m$ is evaluated in Section \ref{Ergodic Rate}. Simulation results and discusses are presented in Section \ref{Numerical Results}, followed by concluding commentaries in Section \ref{Conclusion}. The proofs of mathematics are collected in the Appendix.

The main notations in this paper used are shown as follows. The probability density function (PDF) and cumulative distribution function (CDF) of a random variable $X$ are denoted by ${f_X}\left(  \cdot  \right)$ and ${F_X}\left(  \cdot  \right)$, respectively; $\mathbb{E}\{\cdot\}$ and $\mathbb{D}\{\cdot\}$  denotes the expectation and variance operations, respectively; The superscript ${\left(  \cdot  \right)^H}$ stands for  conjugate-transpose operation.

\section{System Model}\label{Network Model}
We consider a STAR-RIS assisted downlink NOMA network as shown in Fig. \ref{System_Model_Refracting_IRS_NOMA}, where the superposed signals are reflected and transmitted  to a pair of types' non-orthogonal users\footnote{It is worth noting that estimating multiple user pairs scenarios in STAR-RIS-NOMA networks can further enrich the contents of the paper considered, which will be set aside in our future work.}, i.e., the nearby user $n$ and distant user $m$ simultaneously. Due to the serious blockage and complicated wireless environment, we assume that the link from the BS to user $m$ is not available or even fall into complete outage status. More specifically, the user $n$ is on the side of base station (BS) in comparison to STAR-RIS, which has ability to receive both the signal from the BS and signal reflected by SRAR-RIS.
The user $m$ is located on the other side of STAR-RIS, which has only ability to receive the signals transmitted by STAR-RIS. The BS and users are equipped with single antenna, and the STAR-RIS consists of $2K$ configurable elements. We assume that the STAR-RIS elements are divided into two groups, where the first group of STAR-RIS elements are employed to fully reflect signals for reflecting links and the other group of STAR-RIS elements carries out the full refraction mode in the transmitting links. In actual, $K$ elements are exploited for the reflecting links and the remaining elements are used for transmitting links. We denote the complex channel coefficients from the BS to user  $n$, from BS to STAR-RIS, and then from STAR-RIS  to user $\varphi$ by ${{h_{sn}}}$, ${{\bf{h}}_{sr}} \in \mathbb{C}{^{K \times 1}}$ and ${{{\bf{h}}_{r\varphi} }} \in \mathbb{C}{^{K \times 1}}$ with $\varphi  \in \left\{ {n,m} \right\}$, respectively.
The wireless communication links for STAR-RIS network are modeled as the Rician fading channels. The effective cascade channel gains from the BS to STAR-RIS, and then to user $n$ and user $m$ can be written as  ${{\bf{h}}_{rm}^H{{\bf {\Theta}} _R}{{\bf{h}}_{sr}}}$ and ${{\bf{h}}_{rn}^H{{\bf {\Theta}} _T}{{\bf{h}}_{sr}}}$, respectively, where ${{\bf {\Theta}} _R} = {\rm{diag}}( {\sqrt {\beta _1^r} {e^{j\theta _1^r}}, ... ,\sqrt {\beta _k^r} {e^{j\theta _k^r}}, ... ,\sqrt {\beta _K^r} {e^{j\theta _K^r}}} )$ and ${{\bf {\Theta}} _T} = {\rm{diag}}( {\sqrt {\beta _1^t} {e^{j\theta _1^t}}, ... ,\sqrt {\beta _k^t} {e^{j\theta _k^t}}, ... ,\sqrt {\beta _K^t} {e^{j\theta _K^t}}} )$ denote the reflecting and  transmitting phase shifting matrixes of the STAR-RIS, respectively. $\sqrt {{\beta_k^r}}$, $\sqrt {{\beta_k^t}}  \in \left[ {0,1 } \right]$ and ${\theta _k^r}$, ${\theta _k^t} \in \left[ {0,2\pi } \right)$ denote the energy coefficient and phase shift of the $k$-th element for reflecting and transmitting responses, respectively. Applying mode switching protocol stated in \cite{Liu2021STAR360}, the $K$ configurable elements for reflecting links, we have ${\beta _k^r} = 1$ and $ {\beta _k^t} = 0$, while for the remaining $K$ elements for transmitting links, we have ${\beta _k^r} = 0$ and $ {\beta _k^t} = 1$. To support the requirements of diverse scenarios, the elements of STAR-RIS can be operated in full transmission mode or full reflection mode by adjusting the amplitude coefficients for transmission and reflection, which can be seen as the special case of STAR-RIS. For full reflection mode, each elements only reflect the incident signals from BS due to the copper backplane. On the contrary, the incident signals only penetrate the elements without the copper backplane for full transmission mode. Since the channel estimation and feedback process are not the consideration of this manuscript, we assume that perfect channel state information (CSI) at the BS and perfect feedback information to the STAR-RIS can be achieved carefully. Our future work will relax this idealized assumption and more details regarding the channel estimation of RIS can be found in \cite{Pan9180053,Zappone2019Overhead}.

\begin{figure}[t!]
\centering
 \includegraphics[width= 3.4in, height=1.8in]{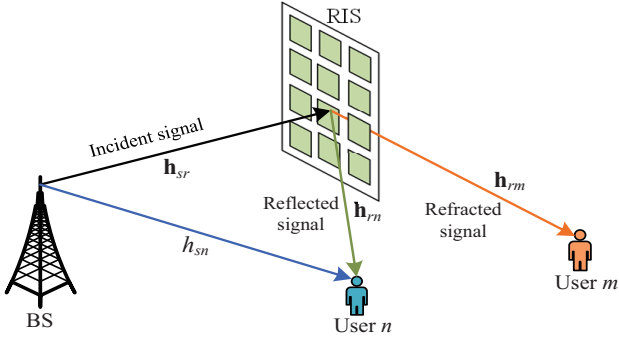}
 \caption{System model of STAR-RIS assisted NOMA networks.}
\label{System_Model_Refracting_IRS_NOMA}
\end{figure}

\subsection{Signal Model}
According to the principle of superposition coding \cite{Tse2005}, the BS broadcasts the superposed signals to a pair of users by the virtue of STAR-RIS.
In addition to receiving the signal from the BS, user $n$ also receives the signal reflected from the STAR-RIS. At this moment, the observation at user $n$ can be written as
\begin{align}\label{The received signals of user n}
{y_n} = \left( {{h_{sn}} + {\bf{h}}_{rn}^H{{\bf {\Theta}} _R}{{\bf{h}}_{sr}}} \right)\left( {\sqrt {{a_n}{P_s}} {x_n} + \sqrt {{a_m}{P_s}} {x_m}} \right) + {{\tilde n}_n},
\end{align}
where $x_{\varphi }$ denotes the unity power signal for user ${\varphi }$. The corresponding power allocation factors $a_n$ and $a_m$ satisfy the relationship ${a_n} < {a_m}$ with ${a_n}+{a_m} = 1$, which is for the viewpoint of user fairness. It is worth noting that the fixed power allocation among users is taken into consideration for STAR-RIS NOMA networks. $P_{s}$ denotes the normalized transmission power at BS. ${{\tilde n}_n}$ is the additive white Gaussian noise (AWGN) with mean power $N_{0}$ at user $n$. ${h_{sn}} = \sqrt {{\alpha _{sn}}} \left( {\sqrt {\frac{\kappa }{{\kappa  + 1}}}  + \sqrt {\frac{1}{{\kappa  + 1}}} {{\tilde h}_{sn}}} \right)$ denotes the channel coefficient from the BS to user $n$ and ${{\tilde h}_{sn}} \sim {\cal C}{\cal N} \left( {0,1} \right)$, where ${\alpha _{sn}} = d_{sn}^{ - \alpha }$ and $d_{sn}$ is the distance from BS to user $n$, {$\alpha $ is the path loss exponent.} 
The Rician factor is denoted by $\kappa $ and when $\kappa $ is set to zero, the corresponding Rician fading channel will be reduced into the Rayleigh fading channels. When $\kappa $ tends to infinity, the corresponding channel only exist the fixed line of sight component. Denoting ${{\bf{h}}_{sr}} = {\left[ {h_{sr}^1 \cdots h_{sr}^k \cdots h_{sr}^K} \right]^H}$ and ${{\bf{h}}_{rn}} = {\left[ {h_{rn}^1 \cdots h_{rn}^k \cdots h_{rn}^K} \right]^H}$, where $h_{sr}^k = \sqrt {\alpha _{sr}^k} \left( {\sqrt {\frac{\kappa }{{\kappa  + 1}}}  + \sqrt {\frac{1}{{\kappa  + 1}}} \tilde h_{sr}^k} \right) $ and $h_{rn}^k = \sqrt {\alpha _{rn}^k} \left( {\sqrt {\frac{\kappa }{{\kappa  + 1}}}  + \sqrt {\frac{1}{{\kappa  + 1}}} \tilde h_{rn}^k} \right)$ are the channel coefficients from BS to the $k$-th reflecting element of STAR-RIS, and then from the $k$-th reflecting element to user $n$, respectively. Define $\alpha _{sr}^k = d_{sr}^{ - \alpha }$, where $\alpha _{rn}^k = d_{rn}^{ - \alpha }$, $d_{sr}$ and $d_{rn}$ denote the distances from the BS to STAR-RIS, and then to user $n$.
The fading gains $\tilde h_{sr}^k$ and $\tilde h_{rn}^k$ are complex Gaussian distributed with zero mean and unit variance, i.e.,
$\tilde h_{sr}^k \sim {\cal C}{\cal N}\left( {0,1} \right)$ and $\tilde h_{rn}^k \sim {\cal C}{\cal N}\left( {0,1} \right)$.

For user $m$, it does not receive the signal from the BS and only receive the signal transmitted by STAR-RIS, which can be given by
\begin{align}\label{The received signals of user m}
{y_m} = {\bf{h}}_{rm}^H{{\bf {{\bf {\Theta}}}} _T}{{\bf{h}}_{sr}}\left( {\sqrt {{a_n}{P_s}} {x_n} + \sqrt {{a_m}{P_s}} {x_m}} \right) + {{\tilde n}_m},
\end{align}
where ${{\bf{h}}_{rm}} = {\left[ {h_{rm}^1 \cdots h_{rm}^k \cdots h_{rm}^K} \right]^H}$ and $h_{rm}^k = \sqrt {\alpha _{rm}^k} \left( {\sqrt {\frac{\kappa }{{\kappa  + 1}}}  + \sqrt {\frac{1}{{\kappa  + 1}}} \tilde h_{rm}^k} \right)$ denotes the channel coefficient from the $k$-th reflecting element of STAR-RIS to user $m$ with $\tilde h_{rm}^k \sim {\cal C}{\cal N}\left( {0,1} \right)$.
${{\tilde n}_m}$ is AWGN with mean power $N_{0}$ at user $m$.

In addition, user $n$ is on the side of BS relative to STAR-RIS, which performs SIC to first detect the signal $x_m$ of user $m$, then proceeding to subtract it and decode its signal. Hence the corresponding signal-plus-interference-to-noise ratio (SINR) can be given by
\begin{align}\label{The SINR of the n-th user to detect the m-th user}
{\gamma _{n \to m}} = \frac{{{{\left| {{h_{sn}} + {\bf{h}}_{rn}^H{{ \bf{\Theta }} _R}{{\bf{h}}_{sr}}} \right|}^2}\rho {a_m}}}{{{{\left| {{h_{sn}} + {\bf{h}}_{rn}^H{{ \bf{\Theta }} _R}{{\bf{h}}_{sr}}} \right|}^2}\rho {a_n}   + 1}},
\end{align}
where  $\rho  = \frac{{{P_s}}}{{{N_0}}}$ denotes the transmit SNR. After applying SIC technology, the SINR of user $n$, who needs to decode the information of itself is given by
\begin{align}\label{The SINR of the n-th user}
{\gamma _n} = \frac{{{{\left| {{h_{sn}} + {\bf{h}}_{rn}^H{{\bf {\Theta}} _R}{{\bf{h}}_{sr}}} \right|}^2}\rho {a_n}}}{{\varpi {{\left| {{h_I}} \right|}^2}\rho  + 1}},
\end{align}
where ${{h_I}}  \sim {\cal C}{\cal N}\left( {0,{\Omega _I}} \right)$ denotes the residual interference from ipSIC. More specifically, $\varpi {\rm{ = 0}}$ and $\varpi {\rm{ = 1}}$ denote the pSIC and ipSIC operations, respectively.

The SINR of user $m$ to decode its the information by treating the signal $x_n$ of user $n$ can be given by
\begin{align}\label{The SINR of the m-th user}
{\gamma _m} = \frac{{{{\left| {{\bf{h}}_{rm}^H{{\bf {\Theta}} _T}{{\bf{h}}_{sr}}} \right|}^2}\rho {a_m}}}{{{{\left| {{\bf{h}}_{rn}^H{{\bf {\Theta}} _T}{{\bf{h}}_{sr}}} \right|}^2}\rho {a_n} + 1}}.
\end{align}
\subsection{STAR-RIS-OMA}
In this subsection, the STAR-RIS-OMA scheme is selected as one of a baseline for the purpose of comparison, where the RIS is deployed to assist the BS to send the information to user $n$ and user $m$. On the condition of the above assumptions, the detecting SNRs of user $n$ and user $m$ for STAR-RIS-OMA can be given by
\begin{align}\label{The SNR of STAR-OMA Usern}
\gamma _n^{OMA} = {\left| {{h_{sn}} + {\mathbf{h}}_{rn}^H{{\bf {\Theta}} _R}{{\mathbf{h}}_{sr}}} \right|^2}\rho  {a_n},
\end{align}
and
\begin{align}\label{The SNR of STAR-OMA Userm}
\gamma _m^{OMA} = {\left| {{\mathbf{h}}_{rm}^H{{\bf {\Theta}} _T}{{\mathbf{h}}_{sr}}} \right|^2}\rho {a_m},
\end{align}
respectively.

\subsection{Channel Statistical Properties}
In this subsection, the channel statistical properties of Rician and cascade Rician channels are provided, which will be employed to evaluate outage behaviors for STAR-RIS-NOMA networks in the following sections.

In light of the above discussions, the channel coefficient, i.e., $h_{sn}$ from the BS to user $n$ follows Rician distribution, where the PDF and CDF of ${\left| {{h_{sn}}} \right|}$ can be given by 
\begin{align}\label{The PDF of Rician channels}
{f_{\left| {{h_{sn}}} \right|}}\left( x \right) = \frac{{2x\left( {\kappa  + 1} \right)}}{{{\alpha _{sn}}{e^\kappa }}}{e^{ - \frac{{{x^2}\left( {\kappa  + 1} \right)}}{{{\alpha _{sn}}}}}}{I_0}\left( {2x\sqrt {\frac{{\kappa \left( {\kappa  + 1} \right)}}{{{\alpha _{sn}}}}} } \right),
\end{align}
and
\begin{align}\label{The CDF of Rician channels}
{F_{{\left| {{h_{sn}}} \right|}}}\left( x \right) = 1 - Q\left( {\sqrt {2\kappa } ,x\sqrt {\frac{{2\left( {\kappa  + 1} \right)}}{{{\alpha _{sn}}}}} } \right),
\end{align}
respectively, where ${I_0}\left(  \cdot  \right)$ is the modified Bessel function of the first kind with order $zero$ and
$Q\left( {a,b} \right) = \int_b^\infty  {x{e^{ - \frac{{{x^2} + {a^2}}}{2}}}{I_0}\left( {ax} \right)dx} $
denotes the generalized Marcum Q-function \cite{Cantrell1987,Shnidman1989}.

Until now, two types of phase shifting designs, i.e., coherent phase shifting and random phase shifting are taken into account \cite{Ding2020Shifting,Yue2020IRSNOMA}. In coherent phase shifting scheme, the phase shift of each reflecting and transmitting element is matched with the phases of its incoming and outgoing fading channels\footnote{It is worth pointing out that the coherent phase shifting needs to carry out the perfect phase adjustment, while the random phase shifting belongs to sub-optimal scheme can avoid the requirement of perfect phase adjustment and reduce the system overhead. }, where the superior performance of STAR-RIS-NOMA networks can be attained carefully. Therefore the coherent phase shifting scheme is selected to deal with the cascade Rician channels. As a further development, the PDF of cascade Rician fading channels from the BS to the $k$-th transmitting and reflecting element, and then to user $\varphi$, i.e., ${f_{\left| {h_{sr}^kh_{r\varphi }^k} \right|}}$  can be given by \cite{2006Probability}
\begin{align}\label{The CDF of cascade Rician channels}
{f_{\left| {h_{sr}^kh_{r\varphi }^k} \right|}}\left( x \right) = &  \frac{1}{{\sqrt {{\alpha _{sr}}{\alpha _{r\varphi}}} }}\sum\limits_{i = 0}^\infty  {\sum\limits_{j = 0}^\infty  {\frac{{4{x^{i + j + 1}}{{\left( {\kappa  + 1} \right)}^{i + j + 2}}{\kappa ^{i + j}}}}{{{{\left( {i!} \right)}^2}{{\left( {j!} \right)}^2}{e^{2\kappa }}}}} }\nonumber \\
& \times {\left( {{\alpha _{sr}}{\alpha _{r\varphi}}} \right)^{ - \frac{{i + l + 1}}{2}}}{K_{i - j}}\left[ {\frac{{2x\left( {\kappa  + 1} \right)}}{{\sqrt {{\alpha _{sr}}{\alpha _{r\varphi}}} }}} \right],
\end{align}
where ${K_v}\left(  \cdot  \right)$ is the modified Bessel function of the second kind with order $v$ \cite[Eq. (8.432)]{2000gradshteyn}.
For notational simplicity, we denote ${X_k} =\left| {h_{sr}^kh_{r\varphi}^k} \right|$. It can be observed that the mean $\mu _\varphi$ and variance $\Omega _\varphi $ of $X_k$ can be given by
\begin{align}\label{the mean of X_k}
\mu _\varphi= \mathbb{E}\left( {{X_k}} \right) = \frac{{\pi \sqrt {{\alpha _{sr}}{\alpha _{r\varphi}}} }}{{4\left( {\kappa  + 1} \right)}}{\left[ {{L_{\frac{1}{2}}}\left( { - \kappa } \right)} \right]^2},
\end{align}
and
\begin{align}\label{the variance of X_k}
\Omega _\varphi = \mathbb{D}\left( {{X_k}} \right) = {\alpha _{sr}}{\alpha _{rn}}\left\{ {1 - \frac{{{\pi ^2}}}{{16{{\left( {1 + \kappa } \right)}^2}}}{{\left[ {{L_{\frac{1}{2}}}\left( { - \kappa } \right)} \right]}^4}} \right\},
\end{align}
respectively, where ${L_{\frac{1}{2}}}\left(  \cdot  \right)$ is the Laguerre polynomial and can be denoted by ${L_{\frac{1}{2}}}\left( \kappa  \right) = {e^{\frac{1}{2}}}\left[ {\left( {1 - \kappa } \right){I_0}\left( { - \frac{\kappa }{2}} \right) - \kappa {I_1}\left( { - \frac{\kappa }{2}} \right)} \right]$.


\section{Outage Probability}\label{Outage Probability}
In this section, the performance of STAR-RIS-NOMA networks is investigated in terms of outage behaviors, where the approximate expressions of outage probability for user $n$ with ipSIC/pSIC and user $m$ are derived in detail. Based on these asymptotic expressions, we further provide the diversity orders of user $n$ with ipSIC/pSIC and user $m$, respectively.
\subsection{The Outage Probability of User $n$}
For the nearby user $n$, the SIC scheme is carried out to first detect the information of distant user $m$, and then decode its own signal. At this moment, the outage events can be explained as: 1) If user $n$ cannot detect the signal $x_m$ of user $m$, the interruption will arise; and 2) User $n$ has ability to decode $x_m$, while its own signal $x_n$ cannot decoded successfully.
With the help of these interruption events, the outage probability of user $n$ for STAR-RIS-NOMA networks can be approximated as
\begin{align}\label{the OP of user n with ipSIC}
{P_{n}} = {\rm{Pr}}\left( {{\gamma _{n \to m}} < {\gamma _{t{h_m}}}} \right) + {\rm{Pr}}\left( {{\gamma _{n \to m}} > {\gamma _{t{h_m}}},{\gamma _n} < {\gamma _{t{h_n}}}} \right),
\end{align}
where ${\gamma _{t{h_n}}} = {2^{{R_n}}} - 1$ and ${\gamma _{t{h_m}}} = {2^{{R_m}}} - 1$ denote the target SNRs of user $n$ and user $m$ with detecting the signals $x_n$ and $x_m$, respectively. $R_n$ and $R_m$ denotes the corresponding target rates.
As a further advance, the outage probability of user $n$ with ipSIC can be provided in the following theorem.

\begin{theorem}\label{Theorem1:the OP of user n with ipSIC under Rician fading channel}
Under Rician fading channels, the approximate expression for outage probability of user $n$ with ipSIC for STAR-RIS-NOMA networks is given by
\begin{align}\label{the OP of user n with ipSIC under Rician fading channel}
 &{P_{n,ipSIC}} \approx \Phi \sum\limits_{p = 1}^P {\sum\limits_{u = 1}^U {{H_p}{b_u}{\chi ^{{\varphi _n} + 1}}{{\left( {{x_u}{\rm{ + }}1} \right)}^{{\varphi _n}}}{e^{ - \frac{{\left( {{x_u}{\rm{ + }}1} \right)\chi }}{{2{\phi _n}}}}}} }  \nonumber \\
 & \times \left\{ {1 - Q\left( {\sqrt {2\kappa } ,\left[ {\chi  - \frac{{\left( {{x_u}{\rm{ + }}1} \right)\chi }}{2}} \right]\sqrt {\frac{{2\left( {\kappa  + 1} \right)}}{{{\alpha _{sn}}}}} } \right)} \right\} ,
\end{align}
where  $\varpi  = 1$, $\beta  = \frac{{{\gamma _{t{h_n}}}}}{{{a_n}\rho }}$, ${b_u} = \frac{\pi }{{2U}}\sqrt {1 - x_u^2} $, ${x_u} = \cos \left( {\frac{{2u - 1}}{{2U}}\pi } \right)$ ,
$\Phi  = \frac{1 }{{{2^{\varphi _n} }{\phi ^{{\varphi _n}  + 1}}\Gamma \left( {{\varphi _n}  + 1} \right)}}$, $\chi  = \sqrt {\beta \left( {\varpi {x_p}{\Omega _{LI}}\rho  + 1} \right)} $,  ${\varphi _n} = \frac{{\mu _n^2K}}{{{\Omega _n}}} - 1$, $\phi _{n} = \frac{{{\Omega _n}}}{{{\mu _n}}}$,
$\mu _n  = \frac{{\pi \sqrt {{\alpha _{sr}}{\alpha _{rn}}} }}{{4\left( {\kappa  + 1} \right)}}{\left[ {{L_{\frac{1}{2}}}\left( { - \kappa } \right)} \right]^2}$,  $\Omega _n  = {\alpha _{sr}}{\alpha _{rn}}\left\{ {1 - \frac{{{\pi ^2}}}{{16{{\left( {1 + \kappa } \right)}^2}}}{{\left[ {{L_{\frac{1}{2}}}\left( { - \kappa } \right)} \right]}^4}} \right\}$ and $\Gamma \left(  \cdot  \right)$ denotes the gamma function \cite[Eq. (8.310.1)]{2000gradshteyn}.
 ${{x_p}}$ and ${H_p}$ are the abscissas and weight of Gauss-Laguerre quadrature, respectively. In particularly, ${{{ x}_p}}$ is the $p$-th zero point of Laguerre polynomial ${{ L}_P}\left( {{{ x}_p}} \right)$ and the $p$-th weight can be expressed as ${H_p} = \frac{{{{\left( {P!} \right)}^2}{x_p}}}{{{{\left[ {{L_{P + 1}}\left( {{x_p}} \right)} \right]}^2}}}$. In addition, $P$ and $U$ are the parameters to guarantee a complexity-accuracy tradeoff.
\end{theorem}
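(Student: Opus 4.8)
The plan is to collapse the two-event outage expression in \eqref{the OP of user n with ipSIC} into a single threshold on the composite channel magnitude, and then to integrate out the three independent sources of randomness, i.e., the ipSIC residual $\left|h_I\right|^2$, the reflected cascade, and the direct Rician link, one at a time.

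First I would introduce the composite magnitude $G=\left|h_{sn}+\mathbf{h}_{rn}^H{\bf\Theta}_R\mathbf{h}_{sr}\right|$ and translate the two SINR conditions into thresholds on $Z=G^2$. Since ${\gamma _{n \to m}}$ and ${\gamma _n}$ are both monotone in $Z$, the event ${\gamma _n}<{\gamma _{t{h_n}}}$ becomes $Z<\beta\left(\varpi\left|h_I\right|^2\rho+1\right)$ with $\beta={\gamma _{t{h_n}}}/(a_n\rho)$, while ${\gamma _{n \to m}}>{\gamma _{t{h_m}}}$ becomes $Z>\tau_m$ for a constant $\tau_m={\gamma _{t{h_m}}}/\left[\rho\left(a_m-{\gamma _{t{h_m}}}a_n\right)\right]$. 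Because the two added events are disjoint, conditioning on the residual interference shows that their sum equals $\Pr\!\left(Z<\max\left(\tau_m,\beta\left(\varpi\left|h_I\right|^2\rho+1\right)\right)\right)$; under the operative target-rate and power-allocation regime (where the $\beta$-threshold dominates $\tau_m$) this reduces to $\Pr(G<\chi)$ with $\chi=\sqrt{\beta\left(\varpi\left|h_I\right|^2\rho+1\right)}$, exactly the magnitude threshold in the statement ($\varpi=1$ for ipSIC).

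Next I would invoke the coherent phase-shifting assumption so that the reflected cascade collapses to the real positive sum $S=\sum_{k=1}^{K}X_k$ with $X_k=\left|h_{sr}^k h_{rn}^k\right|$, and use the triangle approximation $\left|h_{sn}+S\right|\approx\left|h_{sn}\right|+S$ to decouple the direct link from the cascade. Since $S$ is a sum of $K$ i.i.d.\ cascaded-Rician magnitudes with no tractable closed form, the crux is to approximate $S$ by a Gamma random variable whose shape ${\varphi _n}+1=K\mu_n^2/\Omega_n$ and scale $\phi_n=\Omega_n/\mu_n$ are fixed by matching the first two moments from \eqref{the mean of X_k} and \eqref{the variance of X_k}. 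Conditioning on $S=s$ then gives $\Pr(G<\chi\mid S=s)=F_{\left|h_{sn}\right|}(\chi-s)$ for $s<\chi$ and zero otherwise, so that $\Pr(G<\chi)=\int_0^{\chi}F_{\left|h_{sn}\right|}(\chi-s)\,f_S(s)\,ds$, with $F_{\left|h_{sn}\right|}$ the Rician CDF from \eqref{The CDF of Rician channels}.

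Finally I would dispatch the two remaining integrals by Gaussian quadrature. The inner $s$-integral has no elementary antiderivative because of the Marcum-$Q$ factor, so I would substitute $s=\chi(t+1)/2$ to map it onto $[-1,1]$ and apply Gauss--Chebyshev quadrature; this produces the sum over $u$, the weights $b_u$, the factor $\chi^{{\varphi _n}+1}(x_u+1)^{{\varphi _n}}$, the exponential $e^{-(x_u+1)\chi/(2\phi_n)}$, and the leading constant $\Phi$ after absorbing the Gamma normalization and the Jacobian $\chi/2$, while $\chi-s$ collapses to $\chi(1-x_u)/2=\chi-(x_u+1)\chi/2$ inside the $Q$-function. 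The outer expectation over $\left|h_I\right|^2$ (exponential with mean $\Omega_I$) would be handled by Gauss--Laguerre quadrature, giving the sum over $p$, the weights $H_p$, and the dependence of $\chi$ on $x_p$; assembling both yields \eqref{the OP of user n with ipSIC under Rician fading channel}. I expect the main obstacle to be justifying the Gamma moment-matching for $S$ together with the $\left|h_{sn}+S\right|\approx\left|h_{sn}\right|+S$ step, as these are the only non-exact manipulations and are precisely what renders the expression approximate; the two quadratures are mechanical once the single-threshold reduction is in place.
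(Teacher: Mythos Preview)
Your proposal is correct and follows essentially the same route as the paper's Appendix~A: collapse the two outage events to a single magnitude threshold, replace the cascade sum by a moment-matched Gamma variable, convolve with the Rician CDF of the direct link, and then evaluate the nested integrals by Gauss--Chebyshev (inner, via the substitution $s=\chi(t+1)/2$) and Gauss--Laguerre (outer, over $|h_I|^2$). The only difference is one of exposition: you make the $\max\{\tau_m,\beta(\cdot)\}$ reduction and the constructive-addition identity $|h_{sn}+S|=|h_{sn}|+S$ explicit, whereas the paper absorbs both silently into the coherent phase-shifting assumption and jumps directly to the single-$\beta$ threshold.
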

\begin{proof}
See Appendix~A.
\end{proof}


\begin{corollary}\label{Corollary:the OP of  user n with pSIC under Rician fading channel}
For the special case with $\varpi=0$, the approximate expression for outage probability of user $n$ with pSIC for STAR-RIS-NOMA networks is given by
\begin{align}\label{the OP of the n-th user with pSIC under Rician fading channel}
& {P_{n,pSIC}} \approx {\sum\limits_{u = 1}^U {\frac{{\beta \left( {\kappa  + 1} \right){b_u}\left( {{x_u}{\rm{ + }}1} \right)}}{{{\alpha _{sn}}{e^\kappa }\Gamma \left( {{\varphi _n} + 1} \right)}}e} ^{ - \frac{{\beta \left( {\kappa  + 1} \right){{\left( {{x_u}{\rm{ + }}1} \right)}^2}}}{{4{\alpha _{sn}}}}}}  \nonumber  \\
&   \times {I_0}\left( {\left( {{x_u}{\rm{ + }}1} \right)\sqrt {\frac{{\beta \kappa \left( {\kappa  + 1} \right)}}{{{\alpha _{sn}}}}} } \right)\gamma \left( {{\varphi _n} + 1,\frac{{\left( {1 - {x_u}} \right)\sqrt \beta  }}{{2{\phi _n}}}} \right).
\end{align}
\begin{proof}
Upon substituting $\varpi=0$ into \eqref{the further OP expression of user n with ipSIC} and applying similar processes, the outage probability of user $n$ with pSIC can be written as
\begin{align}\label{OP J1 derived for D1}
 {P_{n,pSIC}} = &\frac{{2\left( {\kappa  + 1} \right)}}{{\Gamma \left( {{\varphi _n} + 1} \right){\alpha _{sn}}{e^\kappa }}}\int_0^{\sqrt \beta  } {x{e^{ - \left( {\kappa  + 1} \right)\frac{{{x^2}}}{{{\alpha _{sn}}}}}}}   \nonumber \\
  &  \times {I_0}\left( {2x\sqrt {\frac{{\kappa \left( {\kappa  + 1} \right)}}{{{\alpha _{sn}}}}} } \right)\gamma \left( {{\varphi _n} + 1,\frac{{\sqrt \beta   - x}}{{{\phi _n}}}} \right)dx .
\end{align}
By further applying Gaussian-Chebyshev quadrature into the above integral expression, we can obtain \eqref{the OP of the n-th user with pSIC under Rician fading channel}. The proof is completed.
\end{proof}
\end{corollary}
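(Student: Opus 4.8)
The plan is to derive the pSIC outage probability as the $\varpi=0$ specialization of the ipSIC analysis behind Theorem~\ref{Theorem1:the OP of user n with ipSIC under Rician fading channel}, and then to convert the surviving single integral into a finite sum by Gauss--Chebyshev quadrature. First I would revisit the definition in \eqref{the OP of user n with ipSIC} and observe that both $\gamma_{n\to m}$ in \eqref{The SINR of the n-th user to detect the m-th user} and $\gamma_n$ in \eqref{The SINR of the n-th user} are increasing functions of the single combined gain $Z=|\,h_{sn}+{\bf h}_{rn}^H{\bf\Theta}_R{\bf h}_{sr}\,|$. Consequently the two interruption events are nested threshold crossings of $Z$, and under the fixed NOMA power allocation (with $a_m>a_n\gamma_{th_m}$ guaranteeing a feasible detection of $x_m$) their union reduces to a single dominant crossing, which for the target rates of interest is $\Pr(Z<\sqrt{\beta})$ with $\beta=\gamma_{th_n}/(a_n\rho)$. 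Setting $\varpi=0$ removes the residual-interference factor from $\gamma_n$, so no averaging over $|h_I|^2$ is needed; this is exactly why the outer Gauss--Laguerre sum of Theorem~\ref{Theorem1:the OP of user n with ipSIC under Rician fading channel} disappears and $\chi=\sqrt{\beta(\varpi x_p\Omega_{LI}\rho+1)}$ collapses to $\sqrt{\beta}$.

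The substantive step is the law of $Z$. Under coherent phase shifting the direct link and every cascade term add in phase, so $Z=|h_{sn}|+\sum_{k=1}^{K}X_k$ with $X_k=|h_{sr}^kh_{rn}^k|$. I would keep $|h_{sn}|$ Rician with the PDF \eqref{The PDF of Rician channels}, and approximate the cascade sum $\sum_k X_k$ by a Gamma variable whose shape $\varphi_n+1=\mu_n^2K/\Omega_n$ and scale $\phi_n=\Omega_n/\mu_n$ are pinned by matching the first two moments to those of a single term, i.e.\ $\mu_n$ in \eqref{the mean of X_k} and $\Omega_n$ in \eqref{the variance of X_k}; this moment match is precisely what renders the statement an \emph{approximate} expression. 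Conditioning on $|h_{sn}|=x$ and applying the total-probability identity $\Pr(Z<\sqrt\beta)=\int_0^{\sqrt\beta}f_{|h_{sn}|}(x)\,F_{\sum_k X_k}(\sqrt\beta-x)\,dx$, then writing the Gamma CDF as $\gamma(\varphi_n+1,(\sqrt\beta-x)/\phi_n)/\Gamma(\varphi_n+1)$, produces exactly the single integral over $[0,\sqrt\beta]$ that appears in the corollary.

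The last step is to discretise that integral. I would apply the affine change of variable $x=\tfrac{\sqrt\beta}{2}(t+1)$, which maps $[0,\sqrt\beta]$ onto $[-1,1]$ so that $\sqrt\beta-x=\tfrac{\sqrt\beta}{2}(1-t)$ and $x\,dx$ supplies the Jacobian $\tfrac{\beta}{4}(t+1)\,dt$, and then invoke the Gauss--Chebyshev rule with nodes $x_u=\cos\big(\tfrac{2u-1}{2U}\pi\big)$ and weights $b_u=\tfrac{\pi}{2U}\sqrt{1-x_u^2}$. Folding the constant $2(\kappa+1)/[\Gamma(\varphi_n+1)\alpha_{sn}e^\kappa]$ together with the Jacobian and the quadrature weight reproduces the coefficient $\beta(\kappa+1)b_u(x_u+1)/[\alpha_{sn}e^\kappa\Gamma(\varphi_n+1)]$, while the Gaussian exponential, the $I_0(\cdot)$ factor, and the lower incomplete gamma term map directly onto their forms in \eqref{the OP of the n-th user with pSIC under Rician fading channel}. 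I expect the only genuinely delicate point to be the distributional modelling in the second step: justifying the coherent sum-of-magnitudes representation of $Z$ and arguing that the two-moment Gamma fit for $\sum_k X_k$ is accurate over the relevant ranges of $K$ and $\kappa$. Once that approximation is granted, setting $\varpi=0$ and carefully bookkeeping the quadrature Jacobian make the remaining algebra routine.
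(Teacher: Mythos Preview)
Your proposal is correct and follows essentially the same route as the paper: set $\varpi=0$ so the residual-interference averaging (and hence the Gauss--Laguerre sum) drops out, write $\Pr(Z<\sqrt\beta)$ as the convolution of the Rician PDF of $|h_{sn}|$ with the moment-matched Gamma CDF of $\sum_k X_k$, and then discretise the resulting single integral on $[0,\sqrt\beta]$ via Gauss--Chebyshev quadrature. Your explicit affine change of variable and Jacobian bookkeeping, as well as the threshold-reduction argument, spell out details the paper leaves implicit, but there is no methodological difference.
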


\subsection{The Outage Probability of User $m$}
According to the NOMA protocol, if the distant user $m$ cannot detect the refracting signal $x_m$, the outage will happen. The corresponding outage probability can be written as
\begin{align}\label{the OP of user m}
{P_m} = {\rm{Pr}}\left( {{\gamma _m} < {\gamma _{t{h_m}}}} \right).
\end{align}

\begin{theorem}\label{Theorem2:the OP of user m under Rician fading channel}
Under Rician fading channels, the approximate expression for outage probability of user $m$ for STAR-RIS-NOMA networks is given by
\begin{align}\label{the OP of user m under Rician fading channel}
{P_m} \approx {\left[ {\Gamma \left( {\frac{{K\mu _m^2}}{{{\Omega _m}}}} \right)} \right]^{ - 1}}\gamma \left( {\frac{{K\mu _m^2}}{{{\Omega _m}}},\frac{{{\mu _m}\sqrt \tau  }}{{{\Omega _m}}}} \right),
\end{align}
where  ${\mu _m} = \frac{{\pi \sqrt {{\alpha _{sr}}{\alpha _{rm}}} }}{{{4\left( {\kappa  + 1} \right)}}}{\left[ {{L_{\frac{1}{2}}}\left( { - \kappa } \right)} \right]^2}$, $\tau  = \frac{{{\gamma _{t{h_m}}}}}{{\rho \left( {{a_m} - {\gamma _{t{h_m}}}{a_n}} \right)}}$,
${\Omega _m} = {\alpha _{sr}}{\alpha _{rm}}\left\{ {{\rm{1}} - \frac{{{\pi ^{\rm{2}}}}}{{{\rm{16}}{{\left( {\kappa  + 1} \right)}^2}}}{{\left[ {{L_{\frac{1}{2}}}\left( \kappa  \right)} \right]}^{\rm{4}}}} \right\}$ and $\gamma \left( {a,x} \right) = \int_0^x {{t^{a - 1}}{e^{ - t}}dt} $ is the lower incomplete Gamma function \cite[Eq. (8.350.1)]{2000gradshteyn}. Note that \eqref{the OP of user m under Rician fading channel} is derived under condition ${a_m} > {\gamma _{t{h_m}}}{a_n}$.
\end{theorem}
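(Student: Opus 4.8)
The plan is to reduce the outage event to a threshold condition on a single nonnegative random variable and then obtain its CDF through a Gamma approximation by moment matching. First I would start from the definition $P_m = \mathrm{Pr}(\gamma_m < \gamma_{th_m})$ and substitute the SINR in \eqref{The SINR of the m-th user}. Writing $Y = \left| {\bf{h}}_{rm}^H {\bf{\Theta}}_T {\bf{h}}_{sr} \right|^2$, the inequality $\frac{Y\rho a_m}{Y\rho a_n + 1} < \gamma_{th_m}$ rearranges to $Y\rho(a_m - \gamma_{th_m} a_n) < \gamma_{th_m}$. Provided the stated condition $a_m > \gamma_{th_m} a_n$ holds (so the coefficient is positive and the event is well posed), this is equivalent to $Y < \tau$ with $\tau = \frac{\gamma_{th_m}}{\rho(a_m - \gamma_{th_m} a_n)}$, exactly the quantity appearing in the statement. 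Hence $P_m = \mathrm{Pr}(Y < \tau) = \mathrm{Pr}(\sqrt{Y} < \sqrt{\tau})$, and the condition $a_m > \gamma_{th_m} a_n$ is precisely what is flagged beneath the theorem.

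Next I would exploit the coherent phase shifting design together with the mode-switching protocol. With $\beta_k^t = 1$ for the $K$ transmitting elements and each phase matched to its incoming and outgoing channel, the cascade gain collapses to a real sum of magnitudes, $\left| {\bf{h}}_{rm}^H {\bf{\Theta}}_T {\bf{h}}_{sr} \right| = \sum_{k=1}^K \left| h_{sr}^k h_{rm}^k \right| = \sum_{k=1}^K X_k$. Setting $Z = \sum_{k=1}^K X_k$, each $X_k$ is an i.i.d.\ cascade Rician variable whose mean $\mu_m$ and variance $\Omega_m$ are given in \eqref{the mean of X_k}--\eqref{the variance of X_k}. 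Thus $P_m = \mathrm{Pr}(Z < \sqrt{\tau})$, and the remaining task is the CDF of the sum $Z$.

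The key step is to approximate $Z$ by a Gamma distribution. The exact PDF of a single $X_k$ is the Bessel-type expression \eqref{The CDF of cascade Rician channels}, so the exact law of the $K$-fold sum is an intractable $K$-fold convolution; this is where the word ``approximate'' in the statement enters. I would instead match the first two moments: since the $X_k$ are i.i.d.\ with mean $\mu_m$ and variance $\Omega_m$, the sum $Z$ has mean $K\mu_m$ and variance $K\Omega_m$. Fitting a Gamma variable of shape $a$ and scale $\theta$ to these moments gives $\theta = \Omega_m/\mu_m$ and $a = K\mu_m^2/\Omega_m$. Evaluating the Gamma CDF $F_Z(z) = \Gamma(a)^{-1}\gamma(a, z/\theta)$ at $z = \sqrt{\tau}$ then yields $P_m \approx [\Gamma(K\mu_m^2/\Omega_m)]^{-1}\,\gamma\!\left(K\mu_m^2/\Omega_m,\ \mu_m\sqrt{\tau}/\Omega_m\right)$, which is precisely \eqref{the OP of user m under Rician fading channel}.

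The main obstacle is this distributional approximation rather than the algebra: justifying that the moment-matched Gamma faithfully captures the CDF of a sum of cascade-Rician products near the (possibly small) argument $\sqrt{\tau}$. I would motivate it by noting that moment matching to a Gamma is exact when the summands are themselves Gamma, and that $Z$ is a sum of $K$ i.i.d.\ nonnegative variables whose normalized law concentrates and is well represented by a two-parameter Gamma fit, so the accuracy improves with $K$. The fidelity of the fit is then to be confirmed numerically against Monte Carlo simulation in the results section.
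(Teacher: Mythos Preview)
Your proposal is correct and follows essentially the same route as the paper: reduce the outage event to $\sum_{k=1}^{K}\left|h_{sr}^{k}h_{rm}^{k}\right|<\sqrt{\tau}$ via the coherent phase-shifting assumption, then approximate the sum by a Gamma law with shape $K\mu_m^2/\Omega_m$ and scale $\Omega_m/\mu_m$. The only cosmetic difference is that the paper motivates the Gamma fit as the leading term of a Laguerre-polynomial series expansion (citing \cite[Eq.~(2.76)]{Primak2004}), whereas you phrase it as first- and second-moment matching; the resulting parameters and final expression are identical.
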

\begin{proof}
See Appendix~B.
\end{proof}

\begin{proposition}\label{The system outage of STAR_RIS_NOMA}
Based on above analyses, the system outage probability of STAR-RIS-NOMA with ipSIC/pSIC over Rician fading channels is given by
\begin{align}\label{The expression for system outage of STAR_RIS_NOMA}
P_{NOMA,\Lambda }^{STAR - RIS} = 1 - \left( {1 - {P_{n,\Lambda }}} \right)\left( {1 - {P_m}} \right),
\end{align}
where  $\Lambda   \in \left( {ipSIC,pSIC} \right)$. ${P_{n,ipSIC}}$, ${P_{n,pSIC}}$ and ${P_{m}}$  can be obtained from \eqref{the OP of user n with ipSIC under Rician fading channel}, \eqref{the OP of the n-th user with pSIC under Rician fading channel} and \eqref{the OP of user m under Rician fading channel}, respectively.
\end{proposition}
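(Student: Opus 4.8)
The plan is to treat the system outage as the failure event of the entire two-user network and to decompose it into the individual user outage events already characterized in Theorem~\ref{Theorem1:the OP of user n with ipSIC under Rician fading channel}, Corollary~\ref{Corollary:the OP of  user n with pSIC under Rician fading channel} and Theorem~\ref{Theorem2:the OP of user m under Rician fading channel}. Denoting by $\mathcal{O}_{n,\Lambda}$ the outage event of user $n$ operating with SIC scheme $\Lambda \in \{ipSIC,pSIC\}$ and by $\mathcal{O}_m$ the outage event of user $m$, the network is declared in outage whenever at least one of the two users fails to decode its intended message. Hence $P_{NOMA,\Lambda}^{STAR-RIS} = {\rm Pr}\left( \mathcal{O}_{n,\Lambda} \cup \mathcal{O}_m \right)$, and applying the complement rule together with the identity $\overline{\mathcal{O}_{n,\Lambda} \cup \mathcal{O}_m} = \overline{\mathcal{O}}_{n,\Lambda} \cap \overline{\mathcal{O}}_m$ gives $P_{NOMA,\Lambda}^{STAR-RIS} = 1 - {\rm Pr}\left( \overline{\mathcal{O}}_{n,\Lambda} \cap \overline{\mathcal{O}}_m \right)$, where the overbars denote the complementary (successful-decoding) events.

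The decisive step is to argue that $\mathcal{O}_{n,\Lambda}$ and $\mathcal{O}_m$ are statistically independent, so that the joint success probability factors. To this end I would inspect the random quantities entering the two SINRs. From \eqref{The SINR of the n-th user to detect the m-th user}--\eqref{The SINR of the n-th user}, the event $\mathcal{O}_{n,\Lambda}$ is a deterministic function of the direct coefficient $h_{sn}$, the reflecting cascade gain ${\bf{h}}_{rn}^H{{\bf{\Theta}}_R}{{\bf{h}}_{sr}}$, and, for $\Lambda=ipSIC$, the residual-interference term $h_I$. From \eqref{The SINR of the m-th user}, $\mathcal{O}_m$ is a function solely of the transmitting cascade gains ${\bf{h}}_{rm}^H{{\bf{\Theta}}_T}{{\bf{h}}_{sr}}$ and ${\bf{h}}_{rn}^H{{\bf{\Theta}}_T}{{\bf{h}}_{sr}}$. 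Under the mode-switching protocol of Section~\ref{Network Model}, the $K$ reflecting elements and the $K$ transmitting elements form disjoint groups, so the entries of ${\bf{h}}_{sr}$ and ${\bf{h}}_{rn}$ activated on the reflecting side involve different fading coefficients than those activated on the transmitting side. Combined with the mutual independence of $\tilde h_{sn}$, $\tilde h_{sr}^k$, $\tilde h_{rn}^k$, $\tilde h_{rm}^k$ and $h_I$, this renders the two families of random variables independent, whence the events $\mathcal{O}_{n,\Lambda}$ and $\mathcal{O}_m$ are independent as well.

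With independence established, the joint success probability factors as ${\rm Pr}\left( \overline{\mathcal{O}}_{n,\Lambda} \cap \overline{\mathcal{O}}_m \right) = {\rm Pr}\left( \overline{\mathcal{O}}_{n,\Lambda} \right){\rm Pr}\left( \overline{\mathcal{O}}_m \right) = \left( 1 - P_{n,\Lambda} \right)\left( 1 - P_m \right)$, which upon substitution into the complemented expression yields exactly \eqref{The expression for system outage of STAR_RIS_NOMA}. The required marginals $P_{n,ipSIC}$, $P_{n,pSIC}$ and $P_m$ are then imported from \eqref{the OP of user n with ipSIC under Rician fading channel}, \eqref{the OP of the n-th user with pSIC under Rician fading channel} and \eqref{the OP of user m under Rician fading channel}, respectively.

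I expect the only genuine obstacle to be the independence justification, specifically the bookkeeping needed to confirm that the shared BS--STAR-RIS channel ${\bf{h}}_{sr}$ does not couple the two events: although both users are served through the same physical surface, the disjointness of the reflecting and transmitting element groups under mode switching ensures the relevant channel entries are non-overlapping and hence independent. Everything else reduces to a routine application of the complement rule and the product rule for independent events, with no further calculation beyond reusing the previously derived marginal outage probabilities.
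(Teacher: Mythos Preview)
Your proposal is correct and, in fact, more detailed than what the paper offers: the paper simply states Proposition~\ref{The system outage of STAR_RIS_NOMA} without any proof, treating \eqref{The expression for system outage of STAR_RIS_NOMA} as an immediate consequence of the preceding per-user results. Your explicit justification of independence via the disjointness of the reflecting and transmitting element groups under mode switching is exactly the right ingredient and goes beyond anything the paper spells out.
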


\subsection{The Outage Probability of STAR-RIS-OMA}
 For STAR-RIS-OMA networks, the entire communication process includes two time slots. In the first time slot, the BS sends the information $x_n$ through RIS to reflect to user $n$, and the BS sends $x_m$ to transmit to user $m$ via the assistance of RIS in the second slot. At this moment, an outage is defined as the probability that the instantaneous SNR i.e., $\gamma _\varphi ^{OMA}$ falls bellow a threshold SNR.
 Hence the outage probability of user $n$ and user $m$ for STAR-RIS-OMA can be expressed as
\begin{align}\label{Outage Probability of User n for STAR-RIS-OMA}
P_\varphi ^{OMA} = {\text{Pr}}\left( {\gamma _\varphi ^{OMA} < \gamma _{t{h_\varphi }}^{OMA}} \right),
\end{align}
where $\gamma _{t{h_\varphi }}^{OMA} = {2^{2{R_\varphi }}} - 1$ denote the target SNRs of user $\varphi$ with detecting the signal ${x_\varphi}$.
Similar to the above derived processes, the outage probabilities of user $n$ and user $m$ for STAR-RIS-OMA networks are present in the following theorem.

\begin{theorem}\label{Theorem3:the OP of user n STAR-RIS-OMA}
Under Rician fading channels, the approximate expressions of the outage probability of user $n$ and user $m$ for STAR-RIS-OMA networks can be respectively given by
\begin{align}\label{The expression for Outage Probability of User n for STAR-RIS-OMA}
&P_n^{OMA} \approx {\sum\limits_{u = 1}^U {\frac{{\ell \left( {\kappa  + 1} \right){b_u}\left( {{x_u}{\text{ + }}1} \right)}}
{{{\alpha _{sn}}{e^\kappa }\Gamma \left( {{\varphi _n} + 1} \right)}}e} ^{ - \frac{{\ell \left( {\kappa  + 1} \right){{\left( {{x_u}{\text{ + }}1} \right)}^2}}}{{4{\alpha _{sn}}}}}}\nonumber\\
& \times {I_0}\left( {\frac{{\left( {{x_u}{\text{ + }}1} \right)\sqrt {\ell \kappa \left( {\kappa  + 1} \right)} }}
{{\sqrt {{\alpha _{sn}}} }}} \right)\gamma \left( {{\varphi _n} + 1,\frac{{\left( {1 - {x_u}} \right)\sqrt \ell  }}
{{2{\phi _n}}}} \right),
\end{align}
and
\begin{align}\label{The expression for Outage Probability of User m for STAR-RIS-OMA}
P_m^{OMA} \approx {\left[ {\Gamma \left( {\frac{{K\mu _m^2}}{{{\Omega _m}}}} \right)} \right]^{ - 1}}\gamma \left( {\frac{{K\mu _m^2}}{{{\Omega _m}}},\frac{{{\mu _m}}}{{{\Omega _m}}}\sqrt {\frac{{\gamma _{t{h_m}}^{OMA}}}{\rho {a_m}}} } \right),
\end{align}
where $\ell  = \frac{{\gamma _{t{h_n}}^{OMA}}}{\rho {a_n}}$.
\end{theorem}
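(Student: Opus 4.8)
The plan is to recognize that the STAR-RIS-OMA detection SNRs in \eqref{The SNR of STAR-OMA Usern} and \eqref{The SNR of STAR-OMA Userm} carry exactly the channel structure already handled in the NOMA analysis, so both outage expressions follow by replaying the arguments behind Corollary~1 and Theorem~2 with only the detection thresholds updated. Concretely, $\gamma_n^{OMA}$ in \eqref{The SNR of STAR-OMA Usern} coincides with the numerator of the user-$n$ SINR under pSIC (i.e. $\gamma_n$ with $\varpi=0$), while $\gamma_m^{OMA}$ in \eqref{The SNR of STAR-OMA Userm} is the user-$m$ cascade term with the inter-user interference removed. Hence no new channel statistics are needed: I would reuse the Rician PDF/CDF in \eqref{The PDF of Rician channels}--\eqref{The CDF of Rician channels} and the Gamma moment-matching approximation of the coherently combined cascade sum $\sum_{k=1}^{K} X_k$ established for Theorems~1 and~2.

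For user $n$, I would start from \eqref{Outage Probability of User n for STAR-RIS-OMA} with $\varphi=n$ and substitute \eqref{The SNR of STAR-OMA Usern}, so that the outage event becomes $\big| {h_{sn}} + {\bf{h}}_{rn}^H{{\bf {\Theta}} _R}{{\bf{h}}_{sr}} \big| < \sqrt{\ell}$ with $\ell = \gamma_{t h_n}^{OMA}/(\rho a_n)$. Conditioning on the direct Rician link $|{h_{sn}}|=x$ and invoking coherent phase shifting so the $K$ cascade contributions add in magnitude, the residual probability is the CDF of $\sum_{k=1}^{K} X_k$ evaluated at $\sqrt{\ell}-x$; approximating this sum by a Gamma variable of shape $\varphi_n+1$ and scale $\phi_n$ turns it into $\gamma(\varphi_n+1,(\sqrt{\ell}-x)/\phi_n)/\Gamma(\varphi_n+1)$. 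Integrating this against the Rician PDF \eqref{The PDF of Rician channels} over $x\in[0,\sqrt{\ell}]$ produces an integral identical in form to \eqref{OP J1 derived for D1} but with $\beta$ replaced by $\ell$, and a Gaussian--Chebyshev quadrature then reproduces \eqref{The expression for Outage Probability of User n for STAR-RIS-OMA}.

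For user $m$ the derivation is even shorter, since there is no interfering $x_n$ term: the event in \eqref{Outage Probability of User n for STAR-RIS-OMA} with $\varphi=m$ reduces, after coherent combining of ${{\bf{h}}_{rm}^H{{\bf {\Theta}} _T}{{\bf{h}}_{sr}}}$, to $\sum_{k=1}^{K} X_k < \sqrt{\gamma_{t h_m}^{OMA}/(\rho a_m)}$. Applying the same Gamma approximation to the cascade sum---now with shape $K\mu_m^2/\Omega_m$ and scale $\Omega_m/\mu_m$---and reading off its CDF directly yields the lower incomplete gamma form \eqref{The expression for Outage Probability of User m for STAR-RIS-OMA}, exactly as in Theorem~2 but with the NOMA constant $\tau$ collapsed to $\gamma_{t h_m}^{OMA}/(\rho a_m)$.

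The only genuinely new bookkeeping, and the step most prone to slips, is the threshold translation rather than any analytic difficulty: the two-slot OMA penalty sets $\gamma_{t h_\varphi}^{OMA}=2^{2R_\varphi}-1$ in place of $2^{R_\varphi}-1$, the absence of inter-user interference collapses $\tau$ into the simpler $\gamma_{t h_m}^{OMA}/(\rho a_m)$, and $\beta$ is replaced by $\ell$. Beyond that, the fidelity of both expressions rests entirely on the Gamma moment-matching approximation for the sum of correlated cascade Rician products being tight; since that approximation is the same one underpinning Theorems~1 and~2, no additional justification is required here.
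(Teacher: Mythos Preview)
Your proposal is correct and matches the paper's own approach: the paper does not give a separate proof of Theorem~3 but simply states that the expressions follow ``similar to the above derived processes,'' and you have correctly identified that this means replaying Corollary~1 with $\beta\mapsto\ell$ for user $n$ and Theorem~2 with $\tau\mapsto \gamma_{th_m}^{OMA}/(\rho a_m)$ for user $m$. Your bookkeeping on the OMA threshold change $\gamma_{th_\varphi}^{OMA}=2^{2R_\varphi}-1$ and the collapse of the interference term is exactly the content the paper leaves implicit.
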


\begin{proposition}\label{The system outage of STAR_RIS_OMA}
Similar to \eqref{The expression for system outage of STAR_RIS_NOMA}, the system outage probability of STAR-RIS-OMA over Rician fading channels is given by
\begin{align}\label{The expression for system outage of STAR_RIS_OMA}
P_{OMA }^{STAR - RIS} = 1 - \left( {1 - P_n^{OMA}} \right)\left( {1 - P_m^{OMA} } \right),
\end{align}
where $P_n^{OMA}$ and $P_m^{OMA}$ can be obtained from \eqref{The expression for Outage Probability of User n for STAR-RIS-OMA} and \eqref{The expression for Outage Probability of User m for STAR-RIS-OMA}, respectively.
\end{proposition}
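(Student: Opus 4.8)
The plan is to treat the system outage as the event that the STAR-RIS-OMA network fails to deliver both users' messages within their respective time slots, and then to exploit the statistical independence of the two per-user outage events in order to factor the joint success probability. First I would define the system outage event as the union $O_n \cup O_m$, where $O_\varphi = \{\gamma_\varphi^{OMA} < \gamma_{t{h_\varphi}}^{OMA}\}$ is the per-user outage event already characterised in \eqref{Outage Probability of User n for STAR-RIS-OMA}. Since the network is regarded as successful only when \emph{both} user $n$ and user $m$ decode their intended signals, I would pass to the complement and write $P_{OMA}^{STAR - RIS} = 1 - \Pr(\overline{O_n} \cap \overline{O_m})$, where $\overline{O_\varphi}$ denotes the complementary (no-outage) event of user $\varphi$.

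The crucial step is to argue that $O_n$ and $O_m$ are statistically independent, which licenses the factorisation $\Pr(\overline{O_n} \cap \overline{O_m}) = \Pr(\overline{O_n})\,\Pr(\overline{O_m}) = (1 - P_n^{OMA})(1 - P_m^{OMA})$. This independence follows directly from the system model: under the mode-switching protocol of \cite{Liu2021STAR360}, the $K$ reflecting elements and the $K$ transmitting elements form disjoint subsets of the $2K$ configurable elements, so the cascade gain $|h_{sn} + {\bf{h}}_{rn}^H{{\bf {\Theta}} _R}{{\bf{h}}_{sr}}|^2$ governing $\gamma_n^{OMA}$ in \eqref{The SNR of STAR-OMA Usern} and the cascade gain $|{\bf{h}}_{rm}^H{{\bf {\Theta}} _T}{{\bf{h}}_{sr}}|^2$ governing $\gamma_m^{OMA}$ in \eqref{The SNR of STAR-OMA Userm} depend on mutually exclusive sets of fading coefficients. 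Combined with the independent AWGN terms at the two receivers and the orthogonal time slots in which $x_n$ and $x_m$ are transmitted, the detection SNRs $\gamma_n^{OMA}$ and $\gamma_m^{OMA}$ are independent random variables, so their threshold-crossing events inherit independence as well.

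Finally I would substitute the closed-form per-user outage probabilities, namely $P_n^{OMA}$ from \eqref{The expression for Outage Probability of User n for STAR-RIS-OMA} and $P_m^{OMA}$ from \eqref{The expression for Outage Probability of User m for STAR-RIS-OMA} of Theorem \ref{Theorem3:the OP of user n STAR-RIS-OMA}, into the factored expression to recover \eqref{The expression for system outage of STAR_RIS_OMA}. This argument parallels the NOMA case established in Proposition \ref{The system outage of STAR_RIS_NOMA} and equation \eqref{The expression for system outage of STAR_RIS_NOMA}. The only genuine obstacle is the independence justification; once the disjointness of the reflecting and transmitting element sets is invoked, the remaining manipulation is immediate, and no approximation beyond those already embedded in $P_n^{OMA}$ and $P_m^{OMA}$ is introduced.
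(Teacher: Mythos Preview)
Your proposal is correct and aligns with the paper's intent: the paper offers no explicit proof for this proposition, simply stating ``Similar to \eqref{The expression for system outage of STAR_RIS_NOMA}'' and treating the formula as an immediate consequence of the per-user results in Theorem~\ref{Theorem3:the OP of user n STAR-RIS-OMA}. Your argument supplies the independence justification (disjoint reflecting/transmitting element groups under the mode-switching protocol, hence disjoint sets of fading coefficients in $\gamma_n^{OMA}$ and $\gamma_m^{OMA}$) that the paper leaves implicit, so you are filling in exactly the gap the paper skips over.
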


\subsection{Diversity Analysis}\label{Diversity Analysis}

To gain better insights, the diversity order can be chosen to characterize the outage behaviors for wireless communication networks, which has ability to depict how fast the outage probability decreases with increasing SNR \cite{Alouini2005Digital,laneman2004cooperative}. In other words, the lager diversity order implies the faster decay in outage probability and more robustness to fading. To be precise, the diversity order can be expressed as
\begin{align}\label{The definition of diversity order for IRS-NOMA}
d =  - \mathop {\lim }\limits_{\rho  \to \infty } \frac{{\log \left( {P ^{\infty} \left( \rho  \right)} \right)}}{{\log \rho }},
\end{align}
where ${P ^{\infty}  \left( \rho  \right)}$ denotes the asymptotic outage probability in the high SNR regime.

Then we first provide the approximate outage probability of user $n$ with ipSIC. As can be seen that
the variable $\beta$ in \eqref{the OP of user n with ipSIC under Rician fading channel} is equal to zero at high SNRs. The corresponding outage probability of user $n$ with ipSIC is a constant, which can be provided in the following corollary.
\begin{corollary}\label{Corollary1:the asymptotic OP of user n with ipSIC under Rician fading channel}
When $\rho $ tends to $\infty$, the asymptotic expression for outage probability of user $n$ with ipSIC for STAR-RIS-NOMA networks is given by
\begin{align}\label{the asymptotic OP of user n with ipSIC under Rician fading channel}
P_{n,ipSIC}^\infty  = \Phi \sum\limits_{p = 1}^P {\sum\limits_{u = 1}^U {{H_p}{b_u}{{\left( {\tilde \chi } \right)}^{{\varphi _n}  + 1}}{{\left( {{x_u}{\rm{ + }}1} \right)}^{\varphi _n} }{e^{ - \frac{{\left( {{x_u}{\rm{ + }}1} \right)\tilde \chi }}{{2{\phi _n} }}}}} }    \nonumber     \\
  \times \left\{ {1 - Q\left( {\sqrt {2\kappa } ,\left[ {\tilde \chi  - \frac{{\left( {{x_u}{\rm{ + }}1} \right)\tilde \chi }}{2}} \right]\sqrt {\frac{{2\left( {\kappa  + 1} \right)}}{{{\alpha _{sn}}}}} } \right)} \right\} ,
\end{align}
where $\tilde \chi  = \sqrt { {x_p}{\Omega _I}\tilde \beta } $, and $\tilde \beta  = \frac{{{\gamma _{t{h_n}}}}}{{{a_n}}}$.
\end{corollary}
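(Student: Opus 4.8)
The plan is to obtain the asymptotic (error-floor) expression by taking the limit $\rho \to \infty$ directly inside the approximate formula \eqref{the OP of user n with ipSIC under Rician fading channel} of Theorem~\ref{Theorem1:the OP of user n with ipSIC under Rician fading channel}, rather than re-deriving the outage probability from scratch. First I would isolate every $\rho$-dependent factor in that double sum. Inspecting the summand, the prefactor $\Phi$, the Gauss--Laguerre and Gauss--Chebyshev weights $H_p$ and $b_u$, the nodes $x_p$ and $x_u$, and the shape parameters $\varphi_n$ and $\phi_n$ are all free of $\rho$; the \emph{only} quantity carrying the transmit SNR is $\chi = \sqrt{\beta\left(\varpi x_p \Omega_{LI}\rho + 1\right)}$, which appears both in the algebraic terms and inside the generalized Marcum $Q$-function.

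Next I would substitute $\varpi = 1$ (the ipSIC case) together with $\beta = \gamma_{th_n}/(a_n\rho)$ into $\chi$ and simplify the two competing $\rho$ factors:
\begin{align*}
\chi = \sqrt{\tfrac{\gamma_{th_n}}{a_n\rho}\left(x_p\Omega_{LI}\rho + 1\right)} = \sqrt{\tfrac{\gamma_{th_n}}{a_n}\left(x_p\Omega_{LI} + \tfrac{1}{\rho}\right)} \xrightarrow[\rho\to\infty]{} \sqrt{x_p\Omega_I\tilde\beta},
\end{align*}
where $\tilde\beta = \gamma_{th_n}/a_n$ and $\Omega_I \equiv \Omega_{LI}$ denotes the residual-interference power. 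The crucial point is that although $\beta \to 0$, the residual-interference term $x_p\Omega_{LI}\rho$ inside the radical grows linearly in $\rho$, so the two effects cancel and $\chi$ converges to the strictly positive constant $\tilde\chi = \sqrt{x_p\Omega_I\tilde\beta}$ rather than vanishing.

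Finally, because every remaining term in \eqref{the OP of user n with ipSIC under Rician fading channel} is continuous in $\chi$ and otherwise independent of $\rho$, I would replace $\chi$ by its limit $\tilde\chi$ throughout the entire double summation, including the argument of the Marcum $Q$-function, which yields the claimed expression \eqref{the asymptotic OP of user n with ipSIC under Rician fading channel} at once. I expect the main obstacle to be conceptual rather than computational: one must resist the tempting but wrong step of setting $\beta = 0$ everywhere, which would collapse the expression and hide the physics. The residual interference from imperfect SIC scales with $\rho$, so the $\rho$ buried in $\chi$ exactly compensates the $1/\rho$ in $\beta$, producing a nonzero error floor. Since $P_{n,ipSIC}^\infty$ is thereby a constant independent of $\rho$, the zero diversity order of user $n$ with ipSIC follows immediately from definition \eqref{The definition of diversity order for IRS-NOMA}.
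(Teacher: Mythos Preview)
Your proposal is correct and follows exactly the route the paper takes: the paper simply notes, in the sentence preceding the corollary, that $\beta\to 0$ at high SNR so that the expression in Theorem~\ref{Theorem1:the OP of user n with ipSIC under Rician fading channel} collapses to a constant, and then states \eqref{the asymptotic OP of user n with ipSIC under Rician fading channel} without further detail. Your write-up is in fact more careful than the paper's one-line justification, since you make explicit the key cancellation $\beta\cdot x_p\Omega_{LI}\rho\to x_p\Omega_I\tilde\beta$ that produces the nonzero error floor rather than naively setting $\beta=0$ throughout.
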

\begin{remark}\label{Remark1:the diversity order of user n with ipSIC under Rician fading channel}
Upon substituting \eqref{the asymptotic OP of user n with ipSIC under Rician fading channel} into \eqref{The definition of diversity order for IRS-NOMA}, a $zero$ diversity order is achieved by user $n$ with ipSIC, which is consistent with conventional cooperative NOMA communications. This is due to the impact of residual interference on its outage behaviors.
\end{remark}

To obtain the accurate diversity orders, the asymptotic outage probability of user $n$ with pSIC is calculated by exploiting the Laplace transform and convolution theorem in the following part.
\begin{corollary}\label{Corollary3:the asymptotic OP of user n with pSIC under Rician fading channel}
When $\rho $ tends to $\infty$, the asymptotic expression for outage probability of user $n$ with pSIC for STAR-RIS-NOMA networks is given by
\begin{align}\label{The asymptotic OP of user n with pSIC under Rician fading channel}
P_{n,pSIC}^{asym} = \frac{{2{\vartheta ^K}(\kappa  + 1){\beta ^{K + 1}}}}
{{{\alpha _{sn}}{{\left( {{\alpha _{sr}}{\alpha _{rn}}} \right)}^K}{e^\kappa }\left( {2K + 2} \right)!}},
\end{align}
where $\vartheta  = {}_2{F_1}\left( {2,\frac{1}
{2};\frac{5}
{2};1} \right)\frac{{16{{\left( {1 + \kappa } \right)}^2}}}
{{3\exp \left( {2\kappa } \right)}}$ and ${}_2{F_1}\left( { \cdot , \cdot ; \cdot ; \cdot } \right)$ is the ordinary hypergeometric function \cite[Eq. (9.100)]{2000gradshteyn}.
\end{corollary}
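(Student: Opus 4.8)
The plan is to bypass the Gamma (moment-matching) approximation used in Corollary~\ref{Corollary:the OP of  user n with pSIC under Rician fading channel} --- whose non-integer shape $\varphi_n+1=K\mu_n^2/\Omega_n$ would blur the diversity order --- and instead exploit the exact convolutional structure of the effective channel amplitude. Under coherent phase shifting the magnitude seen by user~$n$ is a sum of $K+1$ independent non-negative random variables, $S+R$ with $S=|h_{sn}|$ Rician and $R=\sum_{k=1}^{K}X_k$, $X_k=|h_{sr}^kh_{rn}^k|$ cascade-Rician. With $\varpi=0$ the outage probability collapses to the CDF of this sum, $P_{n,pSIC}=F_{S+R}(\sqrt{\beta})$ as reflected in \eqref{OP J1 derived for D1}, and since $\rho\to\infty$ forces $\beta=\gamma_{th_n}/(a_n\rho)\to0$, everything reduces to the small-argument behaviour of $F_{S+R}$.

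First I would extract the leading monomial of each constituent density as $x\to0$. For the direct link, \eqref{The PDF of Rician channels} gives $f_{|h_{sn}|}(x)\approx \frac{2(\kappa+1)}{\alpha_{sn}e^{\kappa}}\,x$, since $I_0(0)=1$ and the exponential tends to one. For a single cascade term I would retain only the $i=j=0$ summand of \eqref{The CDF of cascade Rician channels}, namely $\frac{4(\kappa+1)^2}{\alpha_{sr}\alpha_{rn}e^{2\kappa}}\,x\,K_0\!\big(\tfrac{2x(\kappa+1)}{\sqrt{\alpha_{sr}\alpha_{rn}}}\big)$, and reduce its small-argument form to a single power of $x$; collecting the residual constant gives $f_{X_k}(x)\approx \frac{\vartheta}{\alpha_{sr}\alpha_{rn}}\,x$, with the $\kappa$-dependent factor packaged into $\vartheta$ through ${}_2F_1\!\left(2,\tfrac12;\tfrac52;1\right)$.

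Next I would invoke the convolution theorem: near the origin the density of $S+R$ is the $(K+1)$-fold convolution of densities each proportional to $x$, so
\begin{align}
F_{S+R}(t)\approx \frac{2(\kappa+1)}{\alpha_{sn}e^{\kappa}}\Big(\frac{\vartheta}{\alpha_{sr}\alpha_{rn}}\Big)^{K}\int\limits_{s+\sum_k x_k\le t} s\prod_{k=1}^{K}x_k\,ds\,\prod_{k}dx_k. \nonumber
\end{align}
The $(K+1)$-variable Dirichlet integral of this degree-$(K+1)$ monomial over the simplex equals $t^{2K+2}[\Gamma(2)]^{K+1}/\Gamma(2K+3)=t^{2K+2}/(2K+2)!$. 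Setting $t=\sqrt{\beta}$ turns $t^{2K+2}$ into $\beta^{K+1}$ and, after assembling the constants, reproduces \eqref{The asymptotic OP of user n with pSIC under Rician fading channel}; the exponent then yields diversity order $K+1$ via \eqref{The definition of diversity order for IRS-NOMA}.

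The delicate step is the cascade term: $K_0$ has a logarithmic singularity at the origin, so $f_{X_k}(x)$ is not a clean monomial but carries an $x\ln(1/x)$ factor, which is precisely why the ${}_2F_1$ at unit argument sits at the boundary $c=a+b$. The main obstacle is therefore to regularise this contribution so that it collapses into the single finite coefficient $\vartheta$ and a genuine power $x$, and to check that the discarded $i+j\ge1$ terms and the truncated exponential and $I_0$ factors generate only higher powers of $t$ and are hence negligible. Confirming that the residual logarithmic factor is absorbed consistently into $\vartheta$, rather than propagating into the simplex integral and spoiling the $\beta^{K+1}$ scaling, is the crux that makes the stated closed form, and the resulting integer diversity order $K+1$, emerge.
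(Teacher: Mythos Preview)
Your argument and the paper's are the same in spirit but differ in the computational domain. The paper (Appendix~C) works entirely with Laplace transforms: it evaluates $\mathcal{L}[f_{X_k}](s)$ via a Prudnikov integral, retains the leading $s^{-2}$ term as $s\to\infty$ (which is the transform-domain counterpart of your $f_{X_k}(x)\approx(\vartheta/\alpha_{sr}\alpha_{rn})\,x$), multiplies the $K{+}1$ transforms to get $\mathcal{L}[f_V](s)\propto s^{-2K-2}$, and inverse-transforms to $f_V(x)\propto x^{2K+1}/(2K{+}1)!$, whose integral to $\sqrt{\beta}$ yields \eqref{The asymptotic OP of user n with pSIC under Rician fading channel}. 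You stay in the spatial domain and replace the inverse-Laplace step by the Dirichlet simplex integral $\int_{\sum y_i\le t}\prod_i y_i\,dy=t^{2K+2}/(2K{+}2)!$. Since $\mathcal{L}[x]=s^{-2}$ and products of transforms correspond to convolutions, the two computations are term-by-term equivalent; your route is a bit more elementary and makes the origin of the $(2K{+}2)!$ transparent, while the Laplace machinery in the paper makes the truncation of the $i{+}j\ge 1$ terms more mechanical.

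Your concern about the $x\ln(1/x)$ behaviour of $K_0$ is well placed and applies equally to the paper's derivation: the constant $\vartheta$ involves ${}_2F_1(2,\tfrac12;\tfrac52;1)$ with $c=a+b$, i.e.\ exactly at the boundary of convergence. In the paper this appears as the limit ${}_2F_1\big(\cdot;\tfrac{s-2(\kappa+1)}{s+2(\kappa+1)}\big)\to{}_2F_1(\cdot;1)$ for $s\to\infty$, which is simply asserted rather than justified. So the ``crux'' you identify is not a gap peculiar to your approach; it is inherited from the paper's own treatment, which absorbs the logarithmic piece into $\vartheta$ without further comment.
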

\begin{proof}
See Appendix~C.
\end{proof}
\begin{remark}\label{Remark2:the diversity order of user n with pSIC under Rician fading channel}
Upon substituting  \eqref{The asymptotic OP of user n with pSIC under Rician fading channel} into \eqref{The definition of diversity order for IRS-NOMA}, the diversity order of user $n$ with pSIC is equal to $K + 1$, which is in connection with the number of  configurable elements $K$ and direct communication link.
\end{remark}
\begin{corollary}\label{Corollary3:the asymptotic OP of user m under Rician fading channel}
Similar to the solving processes of \eqref{The asymptotic OP of user n with pSIC under Rician fading channel}, when $\rho $ tends to $\infty$, the asymptotic expression for outage probability of user $m$ for STAR-RIS-NOMA networks is given by
\begin{align}\label{the asymptotic OP of user m under Rician fading channel}
P_m^{asym} = \frac{{{\vartheta ^K}{\tau ^K}}}
{{2K{{\left( {{\alpha _{sr}}{\alpha _{rm}}} \right)}^K}\left( {2K - 1} \right)!}}.
\end{align}
\end{corollary}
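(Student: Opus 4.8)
The plan is to mirror the high-SNR argument used for user $n$ with pSIC in Appendix~C, i.e.\ start from the approximate outage expression of Theorem~\ref{Theorem2:the OP of user m under Rician fading channel}, pass to the small-argument regime forced by $\rho\to\infty$, and extract the leading term of the $K$-fold cascade density through a Laplace-transform/convolution argument rather than through the moment-matched Gamma form of \eqref{the OP of user m under Rician fading channel}. First I would recast the outage event. Under coherent phase shifting the transmitting-link gain is the squared sum $S^2$, where $S=\sum_{k=1}^{K}X_k$ with $X_k=\left|h_{sr}^k h_{rm}^k\right|$ i.i.d.\ cascade Rician variables. Using the stated condition ${a_m}>{\gamma _{t{h_m}}}{a_n}$, the event ${\gamma _m}<{\gamma _{t{h_m}}}$ is equivalent to $S<\sqrt{\tau}$ with $\tau=\tfrac{\gamma_{t{h_m}}}{\rho(a_m-\gamma_{t{h_m}}a_n)}$, so that $P_m=F_S(\sqrt{\tau})$. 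Since $\tau$ is of order $1/\rho$ and hence tends to $0$, only the behaviour of $F_S$ near the origin governs the asymptotic.

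Second, I would determine the leading behaviour of the single-element density at the origin. Keeping the dominant $i=j=0$ term of the cascade PDF in \eqref{The CDF of cascade Rician channels}, one gets a leading linear form $f_{X}(x)\approx \tfrac{\vartheta}{\alpha_{sr}\alpha_{rm}}\,x$ near $x=0$, where $\vartheta$ is the $\kappa$-dependent constant (involving the value ${}_2F_1(2,\tfrac12;\tfrac52;1)$) obtained from integrating that term, while the path-loss scaling $\alpha_{sr}\alpha_{rm}$ is pulled out separately because it does not appear inside $\vartheta$. Consequently its Laplace transform satisfies $\mathcal{L}\{f_X\}(\omega)\sim \tfrac{\vartheta}{\alpha_{sr}\alpha_{rm}}\,\omega^{-2}$ for large $\omega$.

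Third, I would invoke the convolution theorem: since $S$ is a sum of $K$ i.i.d.\ terms, $\mathcal{L}\{f_S\}(\omega)=\bigl(\mathcal{L}\{f_X\}(\omega)\bigr)^K\sim \tfrac{\vartheta^{K}}{(\alpha_{sr}\alpha_{rm})^{K}}\,\omega^{-2K}$. Inverting the dominant term gives $f_S(s)\approx \tfrac{\vartheta^{K}}{(\alpha_{sr}\alpha_{rm})^{K}}\tfrac{s^{2K-1}}{(2K-1)!}$ near $s=0$, and integrating yields $F_S(s)\approx \tfrac{\vartheta^{K}}{(\alpha_{sr}\alpha_{rm})^{K}}\tfrac{s^{2K}}{(2K)!}$. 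Substituting $s=\sqrt{\tau}$ and using $(2K)!=2K\,(2K-1)!$ recovers exactly \eqref{the asymptotic OP of user m under Rician fading channel}. As a sanity check, $P_m^{asym}\propto\tau^{K}\propto\rho^{-K}$, so plugging into \eqref{The definition of diversity order for IRS-NOMA} yields a diversity order $K$ for user $m$, consistent with the absence of a direct BS--$m$ link (one less than the order $K+1$ of user $n$ with pSIC).

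The hard part will be step two, namely rigorously extracting the clean linear coefficient $\vartheta$ of the cascade density at the origin: the $i=j=0$ term carries a modified Bessel factor $K_{0}(\cdot)$ with a logarithmic singularity, so justifying the pure $\omega^{-2}$ large-$\omega$ Laplace behaviour (and showing that the logarithmic and higher-order contributions remain subdominant after taking the $K$-th power and inverting) is the delicate point. Equally, the Tauberian-type passage from the large-$\omega$ behaviour of $\mathcal{L}\{f_S\}$ to the small-$s$ behaviour of $f_S$ must be argued carefully so that the retained term genuinely dominates $F_S(\sqrt{\tau})$ as $\tau\to0$.
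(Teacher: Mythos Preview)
Your proposal is correct and follows essentially the same route as the paper. The paper offers no separate proof for this corollary beyond pointing to Appendix~C, and your three steps (recast $P_m=F_S(\sqrt{\tau})$ with $S=\sum_k|h_{sr}^kh_{rm}^k|$, take the large-$s$ Laplace approximation of the single cascade density and raise to the $K$-th power via convolution, then invert and integrate) reproduce exactly that argument with the direct-link factor from \eqref{AppendixC Laplace transform expression $X$  5} omitted; your caution about the $K_0$ logarithmic singularity and the Tauberian step is more rigor than the paper itself supplies.
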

\begin{remark}\label{Remark3:the diversity order of user m under Rician fading channel}
Upon substituting \eqref{the asymptotic OP of user m under Rician fading channel} into \eqref{The definition of diversity order for IRS-NOMA}, the diversity order of $K$ is achieved by user $m$ carefully, which is only related to the configurable elements.
\end{remark}
\begin{remark}\label{Remark4:the diversity order of user m OMA}
On the basis of the procedures in \eqref{The asymptotic OP of user n with pSIC under Rician fading channel} and \eqref{the asymptotic OP of user m under Rician fading channel}, the diversity orders for user $n$ and user $m$ of STAR-RIS-OMA networks are equal to $K + 1$ and $K$, respectively.
\end{remark}

\subsection{Delay-limited Transmission}\label{delay-limited mode System throughput}
In delay-limited transmission scenario, the system throughput is determined by evaluating outage probability at a constant source transmission rate i.e., $R_n$ and $R_m$ \cite{Zhong2014,Yue2020IRSNOMA}. Hence the delay-limited system throughput of STAR-RIS-NOMA with ipSIC/pSIC over Rician fading channels can be given by
\begin{align}\label{The system throughput of delay-limited mode}
{R_{dl,\Lambda  }} = \left( {1 - {P_{n,\Lambda }}} \right){R_n} + \left( {1 - {P_m}} \right){R_m} ,
\end{align}
where ${P_{n,ipSIC}}$, ${P_{n,pSIC}}$ and ${{P_m}}$ can be obtain from \eqref{the OP of user n with ipSIC under Rician fading channel}, \eqref{the OP of the n-th user with pSIC under Rician fading channel} and \eqref{the OP of user m under Rician fading channel}, respectively.

\section{Ergodic Rate}\label{Ergodic Rate}
In this section,  the ergodic performance of user $n$ and user $m$ is characterized for STAR-RIS-NOMA networks, in which the ipSIC and pSIC schemes are also taken into account. Assuming that user $n$ can successfully detect user $m$'s information $x_m$ by invoking SIC scheme and then the ergodic rate of user $n$ with ipSIC is expressed as
\begin{align}\label{the ergodic rate of user n with ipSIC under Rician fading channel}
 R_{n,ipSIC}^{erg} = \mathbb{E} \left[ {\log \left( {1 + { \frac{{{{\left| {{h_{sn}} + {\bf{h}}_{rn}^H{{\bf{\Theta}} _R}{{\bf{h}}_{sr}}} \right|}^2}\rho {a_n}}}{{\varpi {{\left| {{h_I}} \right|}^2}\rho  + 1}}}} \right)} \right].
\end{align}
As can be seen that it is difficult to obtain the exact expression solution from the above equation. However, we can evaluate it numerically by using simulation software, i.e., Matlab or Mathematica. However, the approximate ergodic rate expression of user $n$ with pSIC can be provided in the following theorem.

\begin{theorem}\label{Theorem3:the ergodic rate  of user n under Rician fading channel}
For the special case by substituting $\varpi=0$ into \eqref{the ergodic rate of user n with ipSIC under Rician fading channel}, the approximate expression for ergodic rate of user $n$ with pSIC for STAR-RIS-NOMA networks is given by \eqref{the ergodic rate of the user n with ipSIC} at the top of next page.
\begin{figure*}[!t]
\normalsize
\begin{align}\label{the ergodic rate of the user n with ipSIC}
R_{n,pSIC}^{erg}\approx \frac{{\rho {a_n}}}{{\ln 2}}\int_0^\infty  {\frac{{1 - {{\sum\limits_{u = 1}^U {\frac{{x\left( {\kappa  + 1} \right){b_u}\left( {{x_u}{\rm{ + }}1} \right)}}{{{\alpha _{sn}}{e^\kappa }\Gamma \left( {{\varphi _n} + 1} \right)}}e} }^{ - \frac{{x\left( {\kappa  + 1} \right){{\left( {{x_u}{\rm{ + }}1} \right)}^2}}}{{4{\alpha _{sn}}}}}}{I_0}\left( {\left( {{x_u}{\rm{ + }}1} \right)\sqrt {\frac{{x\kappa \left( {\kappa  + 1} \right)}}{{{\alpha _{sn}}}}} } \right)\gamma \left( {{\varphi _n} + 1,\frac{{\left( {1 - {x_u}} \right)\sqrt x }}{{2{\phi _n}}}} \right)}}{{1 + \rho {a_n}x}}} dx.
\end{align}
\hrulefill \vspace*{0pt}
\end{figure*}
\begin{proof}
Upon substituting $\varpi=0$ into \eqref{the ergodic rate of user n with ipSIC under Rician fading channel}, the ergodic rate of user $n$ with pSIC can be calculated as
\begin{align}\label{ergodic rate expression derived for user n with pSIC}
 R_{n,pSIC}^{erg} =& \mathbb{E}\left[ {\log \left( {1 + \underbrace {{{\left| {{h_{sn}} + {\bf{h}}_{rn}^H{{\bf{\Theta}} _R}{{\bf{h}}_{sr}}} \right|}^2}}_{{X_1}}\rho {a_n}} \right)} \right] \nonumber \\
  =& \frac{{\rho {a_n}}}{{\ln 2}}\int_0^\infty  {\frac{{1 - {F_{{X_1}}}\left( x \right)}}{{1 + \rho {a_n}x}}} dx.
\end{align}
By the virtue of \eqref{the OP of the n-th user with pSIC under Rician fading channel}, the CDF of $X_1$ can be approximated as
\begin{align}\label{CDF of X1}
 &{F_{{X_1}}}\left( x \right) \approx {\sum\limits_{u = 1}^U {\frac{{x\left( {\kappa  + 1} \right){b_u}\left( {{x_u}{\rm{ + }}1} \right)}}{{{\alpha _{sn}}{e^\kappa }\Gamma \left( {{\varphi _n} + 1} \right)}}e} ^{ - \frac{{x\left( {\kappa  + 1} \right){{\left( {{x_u}{\rm{ + }}1} \right)}^2}}}{{4{\alpha _{sn}}}}}} \nonumber \\
 & \times {I_0}\left( {\frac{{\left( {{x_u}{\rm{ + }}1} \right)\sqrt {x\kappa \left( {\kappa  + 1} \right)} }}{{\sqrt {{\alpha _{sn}}} }}} \right)\gamma \left( {{\varphi _n} + 1,\frac{{\left( {1 - {x_u}} \right)\sqrt x }}{{2{\phi _n}}}} \right).
\end{align}
Upon substituting \eqref{CDF of X1} into \eqref{ergodic rate expression derived for user n with pSIC}, we can obtain \eqref{the ergodic rate of the user n with ipSIC}. The proof is completed.
\end{proof}
\end{theorem}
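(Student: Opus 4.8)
The plan is to exploit the fact that setting $\varpi=0$ removes the residual-interference term from the denominator of \eqref{the ergodic rate of user n with ipSIC under Rician fading channel}, so that the pSIC ergodic rate collapses to the expectation of a plain logarithm, $R_{n,pSIC}^{erg}=\mathbb{E}\left[\log\left(1+X_1\rho a_n\right)\right]$, where $X_1=\left|{h_{sn}}+{\mathbf{h}}_{rn}^H{\mathbf{\Theta}}_R{\mathbf{h}}_{sr}\right|^2$ is the combined direct-plus-cascade channel power gain observed by user $n$. The first step is to pass from this expectation to an integral over the complementary CDF of $X_1$, which is the standard route for ergodic-rate evaluations of this form.

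The central identity I would invoke is $\mathbb{E}\left[\log\left(1+cX\right)\right]=\frac{c}{\ln 2}\int_0^\infty\frac{1-F_X\left(x\right)}{1+cx}\,dx$ with $c=\rho a_n$, which follows from integrating by parts once the boundary terms are verified to vanish: at $x=0$ the logarithm is zero, and as $x\to\infty$ the tail $1-F_{X_1}\left(x\right)$ decays fast enough to suppress the logarithmic growth of $\log\left(1+\rho a_n x\right)$. This manipulation reduces the whole problem to supplying the CDF $F_{X_1}$ of the combined channel gain.

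Here the key observation is that the pSIC outage analysis already furnishes this CDF at no extra cost. Because with pSIC the self-decoding event is ${\gamma _n}=X_1\rho a_n<{\gamma _{t{h_n}}}$, the pSIC outage probability in \eqref{the OP of the n-th user with pSIC under Rician fading channel} is precisely $\mathrm{Pr}\left(X_1<{\gamma _{t{h_n}}}/(\rho a_n)\right)=F_{X_1}\left(\beta\right)$ with $\beta={\gamma _{t{h_n}}}/(a_n\rho)$. Consequently I would recover $F_{X_1}\left(x\right)$ simply by promoting the fixed threshold parameter $\beta$ in \eqref{the OP of the n-th user with pSIC under Rician fading channel} to the running variable $x$, and then substitute the resulting CDF into the complementary-CDF integral above to arrive at \eqref{the ergodic rate of the user n with ipSIC}.

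The main obstacle is conceptual rather than computational: the expression reached in this way is only approximate, and its outer integral over $x$ admits no closed form. The approximation is inherited from the Gamma fit to the sum of cascade Rician gains and from the Gaussian--Chebyshev quadrature that produced $F_{X_1}$, so the final one-dimensional integral must be left for numerical evaluation. A secondary point worth confirming is the integrability of $X_1$ needed to justify the complementary-CDF representation, i.e.\ that the boundary contributions in the integration by parts genuinely disappear.
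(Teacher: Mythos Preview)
Your proposal is correct and mirrors the paper's proof exactly: set $\varpi=0$ to obtain $R_{n,pSIC}^{erg}=\mathbb{E}[\log(1+\rho a_n X_1)]$, convert this expectation to the complementary-CDF integral $\frac{\rho a_n}{\ln 2}\int_0^\infty\frac{1-F_{X_1}(x)}{1+\rho a_n x}\,dx$, and read off $F_{X_1}(x)$ from the pSIC outage expression \eqref{the OP of the n-th user with pSIC under Rician fading channel} by promoting the threshold $\beta$ to the running variable $x$. Your added remarks on the vanishing boundary terms and on the inherited approximation (Gamma fit plus Gauss--Chebyshev quadrature) are sound and go slightly beyond what the paper spells out.
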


\begin{theorem}\label{Theorem4:the ergodic rate of user m under Rician fading channel}
Under Rician fading channels, the approximate expression for ergodic rate of user $m$ for STAR-RIS-NOMA networks is approximated as
\begin{align}\label{the ergodic rate of user m under Rician fading channel}
 &{R_{m,erg}}  \approx  \frac{{\pi {a_m}}}{{U\ln 2}}\sum\limits_{n = 1}^N {\frac{{\sqrt {1 - x_u^2} }}{{2{a_n} + \left( {{x_u}{\rm{ + }}1} \right){a_m}}}}  \nonumber \\
&\times \left[ {1 - \frac{1}{{\Gamma \left( {{\varphi _m} + 1} \right)}}\gamma \left( {{\varphi _m} + 1,\sqrt {\frac{{{x_u} + 1}}{{\phi _m^2\rho {a_n}\left( {1 - {x_u}} \right)}}} } \right)} \right],
\end{align}
where ${\varphi _m} = \frac{{\mu _m^2 K}}{{{\Omega _m}}} - 1$ and ${\phi _m} = \frac{{{\Omega _m}}}{{{\mu _m}}}$.
\end{theorem}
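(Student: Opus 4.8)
The plan is to start from the definition in \eqref{the ergodic rate of user m under Rician fading channel}, namely $R_{m,erg}=\mathbb{E}[\log_2(1+\gamma_m)]$ with $\gamma_m$ as in \eqref{The SINR of the m-th user}, and convert the expectation into a single integral over the complementary CDF. First I would set $Y={\left|{{\bf{h}}_{rm}^H{{\bf{\Theta}}_T}{{\bf{h}}_{sr}}}\right|^2}$ and write $\gamma_m=\frac{Y\rho a_m}{Y\rho a_n+1}$, then observe that $\gamma_m$ has \emph{bounded} support: as $Y\to\infty$ one has $\gamma_m\to a_m/a_n$, so $\gamma_m\in[0,a_m/a_n)$. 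Integrating $\mathbb{E}[\ln(1+\gamma_m)]$ by parts (the boundary terms vanish precisely because of the bounded support) and inserting the factor $1/\ln 2$ gives $R_{m,erg}=\frac{1}{\ln 2}\int_0^{a_m/a_n}\frac{1-F_{\gamma_m}(x)}{1+x}\,dx$, where the finite upper limit reflects that $F_{\gamma_m}(x)=1$ for $x\ge a_m/a_n$.

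Next I would determine the CDF of $\gamma_m$. From $\frac{Y\rho a_m}{Y\rho a_n+1}<x$, equivalently $Y<\frac{x}{\rho(a_m-xa_n)}$ for $x<a_m/a_n$, so $F_{\gamma_m}(x)=F_Y\!\left(\frac{x}{\rho(a_m-xa_n)}\right)$. This is exactly the quantity already obtained in the proof of Theorem \ref{Theorem2:the OP of user m under Rician fading channel} (taking $x=\gamma_{th_m}$ recovers the threshold $\tau$), so I would reuse its Gamma approximation: under coherent phase shifting $\sqrt{Y}=\sum_{k=1}^K|h_{sr}^k h_{rm}^k|$ is a sum of $K$ i.i.d.\ terms with mean $\mu_m$ and variance $\Omega_m$, which is moment-matched to a Gamma variable of shape $\varphi_m+1=K\mu_m^2/\Omega_m$ and scale $\phi_m=\Omega_m/\mu_m$. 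Hence $F_Y(y)=\frac{1}{\Gamma(\varphi_m+1)}\gamma\!\left(\varphi_m+1,\frac{\sqrt{y}}{\phi_m}\right)$ and $F_{\gamma_m}(x)=\frac{1}{\Gamma(\varphi_m+1)}\gamma\!\left(\varphi_m+1,\frac{1}{\phi_m}\sqrt{\frac{x}{\rho(a_m-xa_n)}}\right)$.

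With the complementary CDF in hand, I would evaluate the integral by the linear substitution $x=\frac{a_m}{2a_n}(t+1)$, which maps $[0,a_m/a_n]$ onto $[-1,1]$, turns $\frac{1}{1+x}$ into $\frac{2a_n}{2a_n+a_m(t+1)}$, and simplifies $a_m-xa_n=a_m\frac{1-t}{2}$ so that the incomplete-gamma argument collapses to $\sqrt{\frac{t+1}{\phi_m^2\rho a_n(1-t)}}$. Applying Gauss--Chebyshev quadrature $\int_{-1}^1 g(t)\,dt\approx\frac{\pi}{U}\sum_{u=1}^U\sqrt{1-x_u^2}\,g(x_u)$ with nodes $x_u=\cos\frac{(2u-1)\pi}{2U}$, and collecting the constants $\frac{1}{\ln 2}\cdot\frac{a_m}{2a_n}\cdot\frac{\pi}{U}\cdot 2a_n=\frac{\pi a_m}{U\ln 2}$, yields the stated expression \eqref{the ergodic rate of user m under Rician fading channel}.

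The main subtlety, and the step I would be most careful with, is recognizing that $\gamma_m$ has \emph{bounded} support $[0,a_m/a_n)$: this both justifies the vanishing boundary term in the integration by parts and replaces the semi-infinite integral by a finite one, so that finite-interval Gauss--Chebyshev quadrature (rather than Gauss--Laguerre) is the correct tool and the change of variable maps onto exactly $[-1,1]$. It also makes transparent the zero high-SNR slope advertised in the abstract, since the integration limits are $\rho$-independent. The only approximation error enters through the Gamma moment-matching of the cascaded Rician sum, which is inherited verbatim from Theorem \ref{Theorem2:the OP of user m under Rician fading channel}; everything else is exact up to the quadrature truncation controlled by $U$.
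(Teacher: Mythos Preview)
Your proposal is correct and follows essentially the same route as the paper's own proof in Appendix~D: both convert $\mathbb{E}[\log_2(1+\gamma_m)]$ to the complementary-CDF integral, import the Gamma approximation of Theorem~\ref{Theorem2:the OP of user m under Rician fading channel} for $F_{\gamma_m}$, restrict the integration to $[0,a_m/a_n)$ via the constraint $a_m>xa_n$, and finish with Gauss--Chebyshev quadrature. Your write-up is in fact more explicit than the paper's about the bounded-support justification, the linear substitution $x=\frac{a_m}{2a_n}(t+1)$, and the bookkeeping of the prefactor $\frac{\pi a_m}{U\ln 2}$.
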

\begin{proof}
See Appendix~D.
\end{proof}

For STAR-RIS-OMA networks,  the achievable data rate can be written as ${{\tilde R}_\varphi } = \frac{1}{2}\log \left( {1 + \gamma _\varphi ^{OMA}} \right)$. Referring to the derivation processes of \eqref{ergodic rate expression derived for user n with pSIC}, the ergodic rate of user $\varphi$ can be provided in the following corollary.
\begin{corollary}\label{corollary5:the ergodic rate of user m under Rician fading channel STAR-RIS-OMA}
Under Rician fading channels, the expressions of ergodic rate for user $n$ and user $m$ in STAR-RIS-OMA networks are respectively given by \eqref{the ergodic rate of the user n for STAR-OMA} at the top of the next page
\begin{figure*}[!t]
\normalsize
\begin{align}\label{the ergodic rate of the user n for STAR-OMA}
R_{n,OMA}^{erg} \approx \frac{\rho{a_n} }
{{2\ln 2}}\int_0^\infty  {\frac{{1 - {{\sum\limits_{u = 1}^U {\frac{{x\left( {\kappa  + 1} \right){b_u}\left( {{x_u}{\text{ + }}1} \right)}}
{{{\alpha _{sn}}{e^\kappa }\Gamma \left( {{\varphi _n} + 1} \right)}}e} }^{ - \frac{{x\left( {\kappa  + 1} \right){{\left( {{x_u}{\text{ + }}1} \right)}^2}}}
{{4{\alpha _{sn}}}}}}{I_0}\left( {\frac{{\left( {{x_u}{\text{ + }}1} \right)\sqrt {x\kappa \left( {\kappa  + 1} \right)} }}
{{\sqrt {{\alpha _{sn}}} }}} \right)\gamma \left( {{\varphi _n} + 1,\frac{{\left( {1 - {x_u}} \right)\sqrt x }}
{{2{\phi _n}}}} \right)}}
{{1 + \rho{a_n} x}}} dx.
\end{align}
\hrulefill \vspace*{0pt}
\end{figure*}
and
\begin{align}\label{the ergodic rate of the user n for STAR-RIS-OMA}
R_{m,OMA}^{erg} = \frac{\rho{a_m} }
{{2\ln 2}}\int_0^\infty  {\frac{{\Gamma \left( {{\varphi _m} + 1} \right) - \gamma \left( {{\varphi _m} + 1,\frac{{\sqrt x }}
{{{\phi _m}}}} \right)}}
{{\Gamma \left( {{\varphi _m} + 1} \right)\left( {1 + \rho{a_m} x} \right)}}} dx.
\end{align}
\end{corollary}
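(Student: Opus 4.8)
The plan is to mirror the integration-by-parts argument already used for \eqref{ergodic rate expression derived for user n with pSIC}, adjusting only for the pre-log factor $\frac{1}{2}$ that appears because each STAR-RIS-OMA user is served in its own dedicated time slot, and for the different channel statistics governing user $m$. Starting from ${\tilde R}_\varphi = \frac{1}{2}\log(1+\gamma_\varphi^{OMA})$, I would first write the two ergodic rates as expectations $R_{n,OMA}^{erg}=\frac{1}{2}\mathbb{E}[\log(1+X_1\rho a_n)]$ and $R_{m,OMA}^{erg}=\frac{1}{2}\mathbb{E}[\log(1+X_2\rho a_m)]$, where $X_1=|h_{sn}+{\bf{h}}_{rn}^H{\bf{\Theta}}_R{\bf{h}}_{sr}|^2$ and $X_2=|{\bf{h}}_{rm}^H{\bf{\Theta}}_T{\bf{h}}_{sr}|^2$ are read off directly from \eqref{The SNR of STAR-OMA Usern} and \eqref{The SNR of STAR-OMA Userm}.

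Next I would invoke the standard identity $\mathbb{E}[\log(1+cX)]=\frac{c}{\ln2}\int_0^\infty\frac{1-F_X(x)}{1+cx}\,dx$, obtained by integrating by parts with $u=\log(1+cx)$ and $dv=f_X(x)\,dx$; the boundary terms vanish because $\log(1+cx)=0$ at $x=0$ and $1-F_X(x)\to0$ at infinity. This converts each expectation into a single integral against the complementary CDF. For user $n$ the variable $X_1$ is exactly the one whose CDF was approximated in \eqref{CDF of X1}, so substituting that expression with $c=\rho a_n$ and carrying the extra $\frac{1}{2}$ into the prefactor $\frac{\rho a_n}{2\ln2}$ reproduces \eqref{the ergodic rate of the user n for STAR-OMA} verbatim, and this half of the corollary is immediate.

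For user $m$ the only genuinely new ingredient is the distribution of $X_2$. Under coherent phase shifting the refracting cascade gain equals $Y=\sum_{k=1}^{K}|h_{sr}^k h_{rm}^k|$, a sum of $K$ i.i.d.\ cascade-Rician magnitudes each with mean $\mu_m$ and variance $\Omega_m$ taken from \eqref{the mean of X_k} and \eqref{the variance of X_k}. Matching the first two moments of $Y$ to a Gamma law gives shape $K\mu_m^2/\Omega_m=\varphi_m+1$ and scale $\Omega_m/\mu_m=\phi_m$, i.e.\ $F_Y(y)\approx\gamma(\varphi_m+1,y/\phi_m)/\Gamma(\varphi_m+1)$; this is precisely the Gamma approximation already underlying the OMA outage expression \eqref{The expression for Outage Probability of User m for STAR-RIS-OMA}. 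Since $X_2=Y^2$, I would then use $F_{X_2}(x)=F_Y(\sqrt{x})$, so that $1-F_{X_2}(x)=[\Gamma(\varphi_m+1)-\gamma(\varphi_m+1,\sqrt{x}/\phi_m)]/\Gamma(\varphi_m+1)$, and inserting this complementary CDF into the integral identity with $c=\rho a_m$ and the $\frac{1}{2}$ prefactor yields \eqref{the ergodic rate of the user n for STAR-RIS-OMA}.

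The calculation is otherwise routine; the one point that requires care is the moment-matched Gamma approximation of the cascade sum $Y$ together with the square-root transformation $X_2=Y^2$, which is exactly what produces the $\sqrt{x}$ argument inside the lower incomplete gamma function of \eqref{the ergodic rate of the user n for STAR-RIS-OMA}. I would also verify that the resulting integrals converge, which holds because both complementary CDFs decay fast enough to offset the $1/(1+cx)$ factor, thereby justifying the vanishing boundary terms in the integration by parts.
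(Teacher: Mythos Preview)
Your proposal is correct and follows essentially the same route the paper indicates: the paper does not give a separate proof of this corollary but explicitly says it is obtained ``referring to the derivation processes of \eqref{ergodic rate expression derived for user n with pSIC},'' i.e., by applying the complementary-CDF integral identity with the pre-log factor $\tfrac{1}{2}$ and plugging in the CDF approximation \eqref{CDF of X1} for user $n$ and the Laguerre/Gamma approximation underlying \eqref{the OP of user m under Rician fading channel} (together with $F_{X_2}(x)=F_Y(\sqrt{x})$) for user $m$. Your explanation is in fact more explicit than the paper's, but the method is the same.
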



\subsection{Slope Analysis}\label{Slope Analysis}
Similar to the diversity order, the high SNR slope is another performance evaluating indicator aims to capture the diversification of ergodic rate with the transmitting SNRs, which can be defined as
\begin{align}\label{high SNR slope}
S = \mathop {\lim }\limits_{\rho  \to \infty } \frac{{R_{erg}^\infty \left( \rho  \right)}}{{\log \left( \rho  \right)}},
\end{align}
where ${{R_{erg}^\infty \left( \rho  \right)}}$ is the approximated ergodic rate at high SNRs.

As can be seen from \eqref{the ergodic rate of the user n with ipSIC} that it is difficult to calculate the approximate expression of ergodic rate of user $n$ with pSIC. To facilitate analyses, we try to employ the Jensen's inequality to the upper bound of ergodic rate, which can be written as
\begin{align}\label{upper bound}
 &R_{n,pSIC}^{erg} = {\mathbb{E}}\left[ {\log \left( {1 + {{\left| {{h_{sn}} + {\bf{h}}_{rn}^H{ {\bf{\Theta}} _R}{{\bf{h}}_{sr}}} \right|}^2}\rho {a_n}} \right)} \right] \nonumber \\
 & \le \log \left( {1 + \rho {a_n}{\mathbb{E}}\left[ {{{\left( {\underbrace {\left| {{h_{sn}}} \right|}_{{Y_1}} + \underbrace {\left| {{\bf{h}}_{rn}^H{ {\bf{\Theta}} _R}{{\bf{h}}_{sr}}} \right|}_{{Y_2}}} \right)}^2}} \right]} \right).
\end{align}
With the assistant of \eqref{the mean of X_k}, \eqref{the variance of X_k} and \cite[Eq. (2.16)]{2006Probability}, we can calculate the expectation and variance of $Y_1$ and $Y_2$. Hence the upper bound of ergodic rate for user $n$ with pSIC is given by
\begin{align}\label{upper bound of ergodic rate}
R_{n,pSIC}^{erg,upp} =& \log \left\langle {1 + \rho {a_n}\left\{ {\frac{{\pi {\alpha _{sn}}}}{{4\left( {1 + \kappa } \right)}}{{\left[ {{L_{\frac{1}{2}}}\left( { - \kappa } \right)} \right]}^2}} \right.} \right. + K{\Omega _n} \nonumber\\
&   + {\left( {K{\mu _n}} \right)^2} + 2K{\mu _n}\sqrt {\frac{{\pi {\alpha _{sn}}}}{{4\left( {1 + \kappa } \right)}}} {L_{\frac{1}{2}}}\left( { - \kappa } \right)\nonumber \\
& \left. {\left. { + {\alpha _{sn}}\left[ {1 - \frac{\pi }{{4\left( {1 + \kappa } \right)}}{{\left[ {{L_{\frac{1}{2}}}\left( { - \kappa } \right)} \right]}^2}} \right]} \right\}} \right\rangle  .
\end{align}
\begin{remark}\label{Remark4:the slope of user n under Rician fading channel}
Upon substituting \eqref{upper bound of ergodic rate} into \eqref{high SNR slope}, the high SNR slope of user $n$ with pSIC is equal to one, which is in line with the discussions in conventional RIS-NOMA networks. One can observe that the direct link between BS and user $n$ does not improve the high SNR slope.
\end{remark}
Based on \eqref{The SINR of the m-th user}, when $\rho $ tends to infinity, the asymptotic ergodic rate of user $m$ for STAR-RIS-NOMA networks can be given by
\begin{align}\label{the asymptotic ergodic rate of user $m$}
R_{m,erg}^\infty  = \log \left[ {1 + \left( {\frac{{{a_m}}}{{{a_n}}}} \right)} \right].
\end{align}
\begin{remark}\label{Remark5:the slope of user m under Rician fading channel}
Upon substituting \eqref{the asymptotic ergodic rate of user $m$} into \eqref{high SNR slope}, a $zero$  high SNR slope of user $m$ is obtained for STAR-RIS-NOMA networks.
\end{remark}

For the STAR-RIS-OMA, similar to the process of obtaining \eqref{upper bound of ergodic rate}, the upper bound of user $n$ and user $m$ is given by
\begin{align}\label{upper bound of user n OMA}
R_{n,OMA}^{erg,upp} =& \frac{1}{2}\log \left\langle {1 + \rho {a_n} \left\{ {\frac{{\pi {\alpha _{sn}}}}{{4\left( {1 + \kappa } \right)}}{{\left[ {{L_{\frac{1}{2}}}\left( { - \kappa } \right)} \right]}^2} + K{\Omega _n}} \right.} \right. \nonumber\\
&  + {\left( {K{\mu _n}} \right)^2} + 2K{\mu _n}\sqrt {\frac{{\pi {\alpha _{sn}}}}{{4\left( {1 + \kappa } \right)}}} {L_{\frac{1}{2}}}\left( { - \kappa } \right)\nonumber\\
&  \left. {\left. { + {\alpha _{sn}}\left[ {1 - \frac{\pi }
{{4\left( {1 + \kappa } \right)}}{{\left[ {{L_{\frac{1}
{2}}}\left( { - \kappa } \right)} \right]}^2}} \right]} \right\}} \right\rangle,
\end{align}
and
\begin{align}\label{upper bound of user m OMA}
R_{m,OMA}^{erg,upp} = \frac{1}
{2}\log \left\{ {1 + \rho{a_m} \left[ {{{\left( {K{\mu _m}} \right)}^2} + K{\Omega _m}} \right]} \right\}
\end{align}
respectively.
\begin{remark}\label{Remark5:the slope of user n and m for STAR OMA}
Upon substituting \eqref{upper bound of user n OMA} and \eqref{upper bound of user m OMA} into \eqref{high SNR slope} respectively, the high SNR slope of both user $n$ and user $m$ for STAR-RIS-OMA networks are equal to one half.
\end{remark}

\subsection{Delay-tolerant Transmission}\label{delay-tolerated mode System throughput}
In delay-tolerant transmission scenario, the BS send the information with any constant data rate bridled by the user's channel conditions \cite{Zhong2014,Yue8026173}. At this moment, the throughput of STAR-RIS-NOMA with pSIC over Rician fading channels can be given by
\begin{align}\label{The system throughput of delay-tolerant mode}
{R_{dt}} = R_{n,pSIC}^{erg} + {R_{m,erg}} ,
\end{align}
where $R_{n,pSIC}^{erg}$ and ${R_{m,erg}}$ can be obtain from \eqref{the ergodic rate of the user n with ipSIC} and \eqref{the ergodic rate of user m under Rician fading channel}, respectively.

\begin{table}[t]
\centering
\caption{Diversity order and high SNR slope for STAR-RIS-NOMA and STAR-RIS-OMA networks.}
\tabcolsep5pt
\renewcommand\arraystretch{1.85} 
\begin{tabular}{|c|c|c|c|c|}
\hline
\textbf{Mode } &  \textbf{SIC}  &  \textbf{User}  &  \textbf{D}  &  \textbf{S}  \\
\hline
\multirow{2}{*}{STAR-RIS-OMA }  & \multirow{2}{*}{\raisebox{0.5mm}{------}} &  user $n$  & $K+ 1$ &$\frac{1}{2}$ \\
\cline{3-5}
                                &  &  user $m$  & $K $  & $\frac{1}{2}$ \\
\hline
\multirow{3}{*}{STAR-RIS-NOMA } & ipSIC &  user $n$  & 0 & --- \\
\cline{2-5}
                                & pSIC &  user $n$  & $K+ 1$  & 1 \\
                                \cline{2-5}
                                & ------ &  user $m$  & $K$  & 0  \\
\hline
\end{tabular}
\label{parameter}
\end{table}
\begin{table}[!t]
\centering
\caption{The parameters for simulation results.}
\tabcolsep5pt
\renewcommand\arraystretch{1.2} 
\begin{tabular}{|l|l|}
\hline
Monte Carlo simulations repeated  &  ${10^6}$ iterations \\
\hline
\multirow{2}{*}{The power allocation factors for two users }&  \multirow{1}{*}{$a_n=0.2$}   \\
                                                        &  \multirow{1}{*}{$a_m=0.8$}   \\
\hline
\multirow{2}{*}{The targeted data rates for two users } & \multirow{1}{*}{$R_{{n}}=0.5$ BPCU}  \\
                                                & \multirow{1}{*}{$R_{{m}}=0.5$ BPCU} \\
\hline
The distance from BS to user $n$   &  $d_{sn}=10$ m   \\
\hline
\multirow{1}{*}{The distance from BS to STAR-RIS  }
                                                     & \multirow{1}{*}{$d_{sr}=8$ m } \\
\hline
\multirow{1}{*}{The distance from STAR-RIS to user $n$   }
                                                     & \multirow{1}{*}{$d_{rn}=6$ m }  \\
\hline
\multirow{1}{*}{The distance from STAR-RIS to user $m$   }
                                                     & \multirow{1}{*}{$d_{rm}=10$ m }  \\
                                                     \cline{1-2}
Pass loss expression   &  $\alpha  = 2 $   \\
\cline{1-2}

\end{tabular}
\label{parameter}
\end{table}
\section{Simulation Results}\label{Numerical Results}
In this section, we provide the simulation results to verify the theoretical analysis results derived in the above subsections for STAR-RIS-NOMA networks.
The impacts of configurable elements $K$ and Rician factor $\kappa$ on the performance of STAR-RIS-NOMA are taken into account carefully. For notational simplicity, Table~\ref{parameter} has summarized the simulation parameters used in this paper, in which BPCU is the short for bit per channel use and the fixed power allocation of non-orthogonal users considered is validity for the analytical expressions of outage probability and ergodic rate. Note that the choice of small target rates can be applied into the Internet of Thing scenarios, i.e.,  small packet service and so on.
To guarantee the accuracy of approximate expressions, the complexity-accuracy tradeoff parameters $P$ and $U$ are set to be $P = 300$ and $U=50$, respectively.  Without loss of generality, the conventional cooperative communication schemes, i.e., FD/HD DF and amplify-and-forward (AF)  relaying and STAR-RIS-OMA are selected to  be benchmarks for the purpose of comparison. It is worth pointing out that the entire communication process of STAR-RIS-OMA includes two time slots. In the first time slot, the BS sends the information ${x_n}$ through RIS to reflect to user $n$, and the BS sends ${x_m}$ to transmit to user $m$ via the assistance of RIS in the second slot. At this moment, the overall energy consumed of STAR-RIS-OMA is equal to that of STAR-RIS-NOMA from the perspective of comparison fairness.

\subsection{Outage Probability}
\begin{figure}[t!]
    \begin{center}
        \includegraphics[width=3.0in,  height=2.2in]{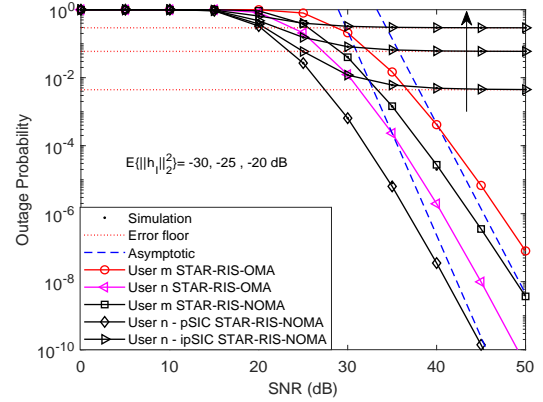}
        \caption{Outage probability versus the transmit SNR, with $K=5$, $\kappa = -5$ dB, $R_n=0.5$ and $R_m=0.5$ BPCU.}
        \label{STAR_NOMA_add_OMA}
    \end{center}
\end{figure}
Fig. \ref{STAR_NOMA_add_OMA} plots the outage probability of STAR-NOMA networks versus SNR with setting to be $K=5$, $\kappa = -5$ dB, $R_n=0.5$ and $R_m=0.5$ BPCU. The diamond and right triangle solid curves for outage probability of user $n$ with pSIC/ipSIC are plotted according to \eqref{the OP of user n with ipSIC under Rician fading channel} and \eqref{the OP of the n-th user with pSIC under Rician fading channel}, respectively. The square curve for outage probability of user $m$ is plotted based on \eqref{the OP of user m under Rician fading channel}.
The left triangle and circle solid curves for outage probability of user $n$ and user $m$ for STAR-RIS-OMA
are plotted based on \eqref{The expression for Outage Probability of User n for STAR-RIS-OMA} and \eqref{The expression for Outage Probability of User m for STAR-RIS-OMA}, respectively.
The outage probability curves are given by numerical simulation results and perfectly match with the theoretical analysis expressions derived in the above sections.
One can observe that the outage behaviors of both user $n$ with pSIC and user $m$ for STAR-RIS-NOMA are superior to that of STAR-RIS-OMA. This is due to the fact that NOMA is capable of providing better fairness compared with OMA when multiple users are served simultaneously \cite{Ding2017Mag,Yue2020IRSNOMA}.
The blue dotted curve for asymptotic outage probability of user $n$ with pSIC/ipSIC and user $m$ are plotted based on the theoretical results in \eqref{the asymptotic OP of user n with ipSIC under Rician fading channel}, \eqref{The asymptotic OP of user n with pSIC under Rician fading channel} and \eqref{the asymptotic OP of user m under Rician fading channel}, respectively. The asymptotic outage probabilities of user $n$ with ipSIC/pSIC and user $m$ match the exact performance curves in the high SNR regime, which provides an effective performance evaluation method. As can be observed that the outage behavior of user $n$ with pSIC outperforms that of user $m$ for STAR-RIS-NOMA networks. The reason is that user $n$ with pSIC can obtain the larger diversity order compared to user $m$, which is in line with the insights in \textbf{Remark \ref{Remark2:the diversity order of user n with pSIC under Rician fading channel}}.
Due to the influence of residual interference, the outage probability of user $n$ with ipSIC converges to  an error floor and thus gain a zero diversity gain.  which confirms the conclusion in \textbf{Remark \ref{Remark1:the diversity order of user n with ipSIC under Rician fading channel}}. Furthermore, with the increasing the value of residual interference, the outage performance of user $n$ with ipSIC is becoming much worse in comparison with other users. Hence it is important to take into consideration these factors in actual communication scenarios.

\begin{figure}[t!]
    \begin{center}
        \includegraphics[width=3.0in,  height=2.1in]{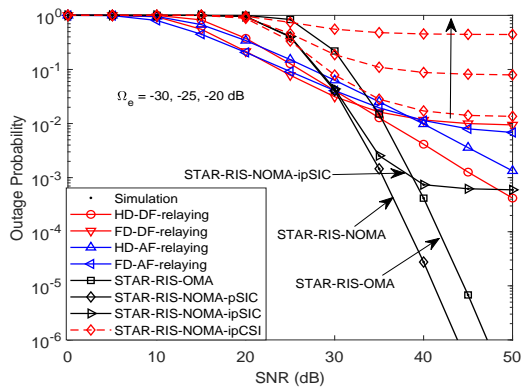}
        \caption{Outage probability versus the transmit SNR, with $K=5$, $\kappa = -5$ dB, $\mathbb{E}\{|h_{I}|^2\}=-30$ dB, $R_n=0.5$ and $R_m=0.5$ BPCU.}
        \label{STAR_add_relay_ipCSI}
    \end{center}
\end{figure}
\begin{figure}[t!]
    \begin{center}
        \includegraphics[width=3.0in,  height=2.1in]{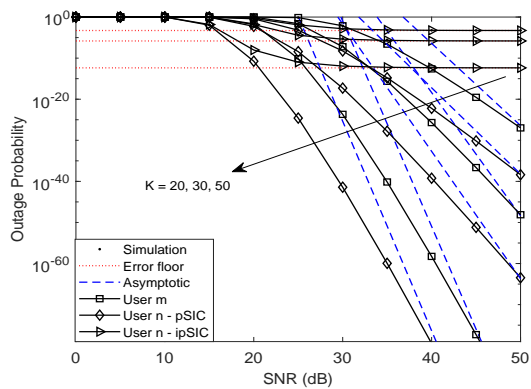}
        \caption{Outage probability versus the transmit SNR, with  $\kappa = -5$ dB, $\mathbb{E}\{|h_{I}|^2\}=-30$ dB, $R_n=2$ and $R_m=2$ BPCU.}
        \label{STAR_NOMA_diff_K}
    \end{center}
\end{figure}
To explain the superiority of STAR-RIS-NOMA, Fig. \ref{STAR_add_relay_ipCSI} plots the outage probability of STAR-NOMA networks versus SNR with different benchmarks.
It can be observed that the outage behavior of STAR-RIS-NOMA with pSIC is superior to that of STAR-RIS-OMA and conventional cooperative communication systems, i.e., HD/FD DF relays \cite{laneman2004cooperative,Kwon2010Optimal} and HD/FD AF relays \cite{laneman2004cooperative,Osorio2015}. The main reasons are that 1) The FD DF/AF relays will be affected by loop residual interference, and it needs to use the advanced cancellation technology to eliminate the interference; 2) For HD DF/AF relays, STAR-RIS-NOMA networks work in FD mode and are not affected by loop interference; and 3) The STAR-RIS-NOMA has ability to provide the higher spectrum efficiency and user fairness relative to STAR-RIS-OMA. Additionally,  the impact of channel estimation error, i.e., $\Omega_e$ on system performance are taken into consideration in  Fig. \ref{STAR_add_relay_ipCSI}. One can observe that as the increase of channel estimate errors, i.e., from $\Omega_e = -30$ dB to $\Omega_e = -20$ dB, the outage probability of STAR-RIS-NOMA with pSIC is becoming much larger and also converge to the error floors at high SNRs. As a result, it is important to consider the effect of imperfect CSI when designing practical communication systems. Furthermore, Fig. \ref{STAR_NOMA_diff_K} plots the outage probability of STAR-RIS-NOMA networks versus SNR with setting to be $\mathbb{E}\{|h_{I}|^2\}=-30$ dB $R_n=2$ and $R_m=2$ BPCU. We can be seen from the figure that as the number of  configurable elements $K$  grows, the outage probability of user $n$ and user $m$ for STAR-RIS-NOMA is getting much smaller and gain a steeper slope. This is because that the diversity orders of non-orthogonal users are related to the configurable elements at the RIS, which is also in line with the conclusions in \textbf{Remark \ref{Remark2:the diversity order of user n with pSIC under Rician fading channel}} and \textbf{Remark \ref{Remark3:the diversity order of user m under Rician fading channel}}. This phenomenon indicates that it is prerequisite to adjust the number of configurable elements involved in the work according to the different service requirements.

\begin{figure}[t!]
    \begin{center}
        \includegraphics[width=3.0in,  height=2.0in]{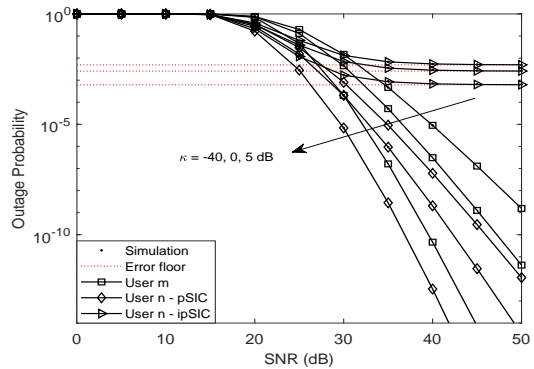}
        \caption{Outage probability versus the transmit SNR, with $K=5$, $\mathbb{E}\{|h_{I}|^2\}=-30$ dB, $R_n=0.5$ and $R_m=0.5$ BPCU.}
        \label{STAR_NOMA_diff_Kappa}
    \end{center}
\end{figure}
\begin{figure}[t!]
    \begin{center}
        \includegraphics[width=3.0in,  height=2.0in]{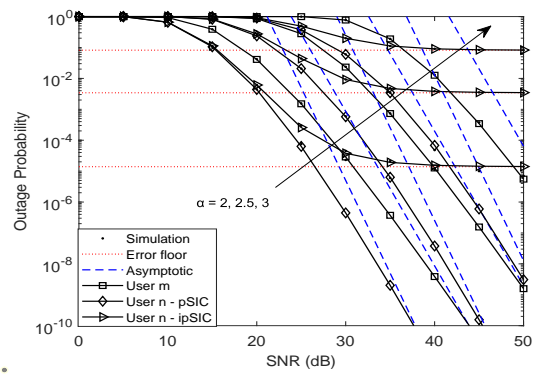}
        \caption{Outage probability versus the transmit SNR, with $K=5$, $\kappa = -5$ dB, $\mathbb{E}\{|h_{I}|^2\}=-30$ dB, $R_n=0.1$ and $R_m=0.1$ BPCU.}
        \label{STAR_NOMA_diff_alpha}
    \end{center}
\end{figure}
Fig. \ref{STAR_NOMA_diff_Kappa} plots the outage probability of STAR-RIS-NOMA networks versus SNR for the simulation with different Rician factors and $\mathbb{E}\{|h_{I}|^2\}=-30$ dB. We can observe that the Rician factor i.e., $\kappa $ has a relatively large impact on network performance of STAR-RIS-NOMA, where the outage probability of user $n$ and user $m$ decreases with the increasing of Rician factor values, i.e., $\kappa = 0$ dB to $\kappa = 5$ dB. This phenomenon can be explained that the LoS components of Rician fading channels dominate the network performance of STAR-RIS-NOMA.
Another observation is that as the Rician factor grows, i.e., $\kappa = -40$ dB to $\kappa = 0$ dB,  the outage probability of user $n$ and user $m$ for STAR-RIS-NOMA networks has minor changes. This is due to the fact that the cascade channels from the BS to RIS, and then RIS to user $n$ and user $m$ have been aligned by invoking coherent phase shifting, where the equivalent channels have a non-zero mean. Additionally, it is worth pointing out that the channels between the BS and user $n$ also include the direct link from the BS to user $n$ except the cascade channels from BS to RIS, and then RIS to user $n$, which also result in the closer outage performance under Rayleigh and Rician fading channels.

As a further development, Fig. \ref{STAR_NOMA_diff_alpha} the outage probability of STAR-RIS-NOMA networks versus SNR for the simulation with different pass loss expressions and $\mathbb{E}\{|h_{I}|^2\}=-30$ dB. One can make the following observation from figure that with the decreasing of pass loss expression, the outage behaviors of user $n$ and user $m$ are becoming much worse in different communication environment. This is due to the fact that the pass loss expression is mainly determined by the propagation environment. When $\alpha$ is relatively large, it indicates that there are many obstacles in the communication scenarios. This also confirms that the STAR-RIS can be deployed to provide the LoS transmissions.
\begin{figure}[t!]
   \begin{center}
        \includegraphics[width=3.0in,  height=2.0in]{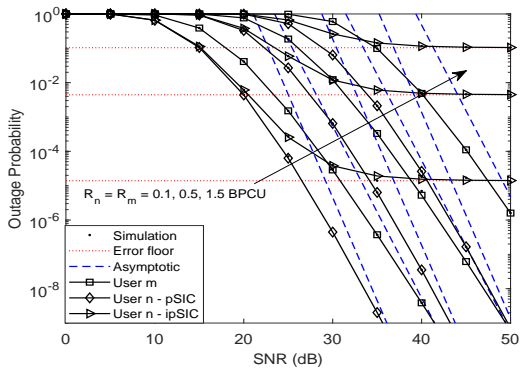}
        \caption{Outage probability versus the transmit SNR, with $K=5$, $\kappa = -5$ dB and $\mathbb{E}\{|h_{I}|^2\}=-30$ dB.}
        \label{STAR_NOMA_diff_Rate}
    \end{center}
\end{figure}
In addition, Fig. \ref{STAR_NOMA_diff_Rate} plots the outage probability of STAR-RIS-NOMA networks versus SNR with setting to be $K=5$, $\kappa = -5$ dB, $\mathbb{E}\{|h_{I}|^2\}=-30$ dB, and ${R_n} = {R_m} = 0.1,0.5,1.5$ BPCU.
It is observed that as the target rates increase, the larger outage probabilities are achieved for STAR-RIS-NOMA networks. The reason is that the achievable rates are directly combined with the target SNRs. It is favorable to decode the superposed signals for the user pairing selected with smaller target SNRs.

\begin{figure}[t!]
   \begin{center}
        \includegraphics[width=3.0in,  height=2.0in]{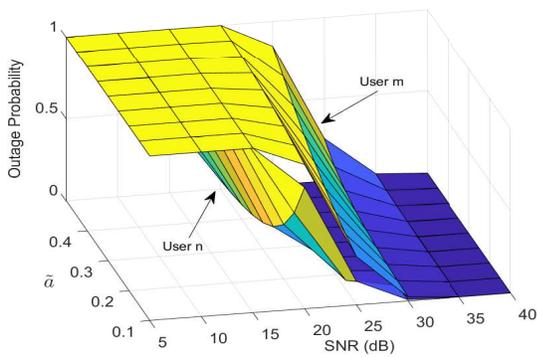}
        \caption{Outage probability versus the transmit SNR and ${\tilde a}$, with $K=5$, $\kappa = -5$ dB and $\mathbb{E}\{|h_{I}|^2\}=-30$ dB.}
        \label{STAR_RIS_NOMA_3D}
    \end{center}
\end{figure}

Fig. \ref{STAR_RIS_NOMA_3D} plots the outage probability versus SNR and the dynamic power allocation factor $\tilde a  \in  \left( {0,1} \right)$, with $K=5$, $\kappa = -5$ dB, $R_n=0.5$ and $R_m=0.5$ BPCU. Let ${a_n} = \tilde a $ and  ${a_m} = 1 -\tilde a $, which also satisfies the relationship with $\tilde a < \frac{1}{{{\gamma _{t{h_m}}} + 1}}$. The analytical curves of outage probability of user $n$ with pSIC and user $m$ are plotted according to (15) and (18), respectively. One can observe that with the value of ${\tilde a}$ increasing, the performance of user $n$ with pSIC becomes better, while the outage behavior of user $m$ deteriorates gradually. This is due to the fact that user $m$ suffers from more interference when it detects its own information.  Hence it is critical to seek out the optimal power allocation factors for balancing the performance of two users.
\subsection{Ergodic Rate}
\begin{figure}[t!]
    \begin{center}
        \includegraphics[width=3.0in,  height=2.0in]{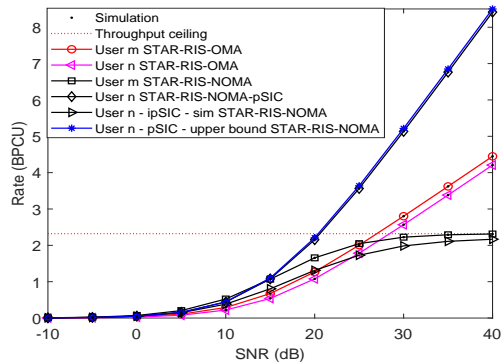}
        \caption{Rate versus the transmit SNR, with $K=20$, $\mathbb{E}\{|h_{I}|^2\}=-30$ dB, and $\kappa = -5$ dB.}
        \label{STAR_Ergodic_Rate}
    \end{center}
\end{figure}
\begin{figure}[t!]
    \begin{center}
        \includegraphics[width=3.0in,  height=2.0in]{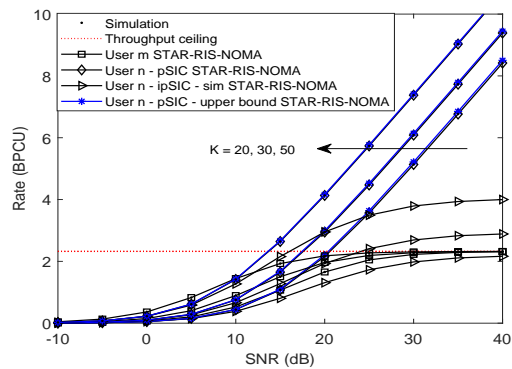}
        \caption{Rate versus the transmit SNR, with $\kappa = -5$ dB and $\mathbb{E}\{|h_{I}|^2\}=-30$ dB.}
        \label{STAR_Ergodic_Rate_diff_K}
    \end{center}
\end{figure}
Fig. \ref{STAR_Ergodic_Rate} plots the ergodic rates versus SNR, with $K=20$ and $\kappa = -5$ dB.
The right diamond and square solid curves for ergodic rates of user $n$ with pSIC and user $m$ for STAR-RIS-NOMA networks are plotted based on \eqref{the ergodic rate of the user n with ipSIC} and \eqref{the ergodic rate of user m under Rician fading channel}, respectively. Furthermore, the upper bound curve for ergodic rate of user $n$ with pSIC is plotted based on \eqref{upper bound of ergodic rate}, which can be better close to the theoretical expression.  One can observe that the ergodic rate of user $m$ converges to a throughput ceiling and thus obtain a zero high SNR slope, which is in line with the discussion in \textbf{Remark \ref{Remark5:the slope of user m under Rician fading channel}}.
The right triangle solid curve for ergodic rate of user $n$ with ipSIC is plotted according to \eqref{the ergodic rate of user n with ipSIC under Rician fading channel} by invoking Matlab simulation software. Due to the influence of residual interference, the ergodic rate of user $n$ with ipSIC tends to the constant
value at high SNRs. In addition, we can see that the ergodic rate of user $n$ with pSIC outperforms that of orthogonal users in the high SNR regime, while the ergodic rate of user $m$ is inferior to that of orthogonal user. This is due to the fact the user $n$ with pSIC gets a larger high SNR slope compared to orthogonal user. However, the high SNR slope of user $m$ is equal to zero, which is less than orthogonal user.
As a further advance, Fig. \ref{STAR_Ergodic_Rate_diff_K} plots the ergodic rates versus SNR for a simulation system with different reconfigurable elements. With increasing of reconfigurable elements, the ergodic rates of user $n$ with pSIC for STAR-RIS-NOMA networks are becoming much larger relative to that of user $m$.

\subsection{System Throughput}
\begin{figure}[t!]
    \begin{center}
        \includegraphics[width=3.0in,  height=2.0in]{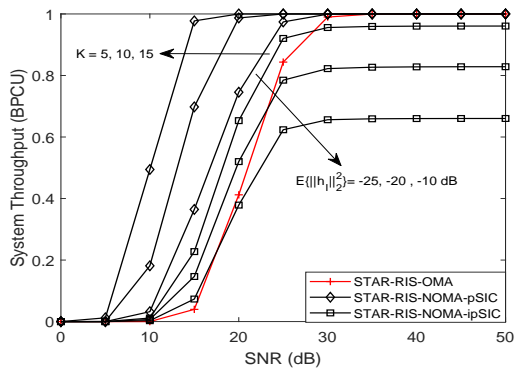}
          \caption{System throughput in delay-limited transmission mode versus SNR, with $K=5$ and $\kappa = -5$ dB.}
        \label{STAR_NOMA_delay_L}
    \end{center}
\end{figure}
\begin{figure}[t!]
    \begin{center}
        \includegraphics[width=3.0in,  height=2.0in]{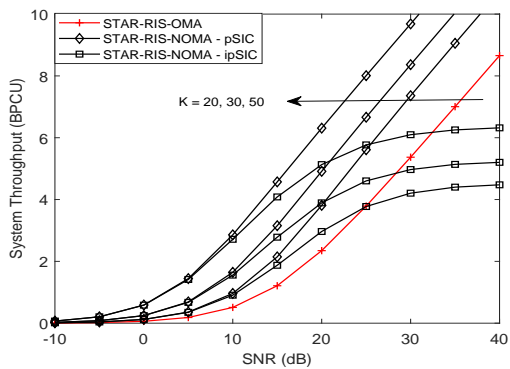}
          \caption{System throughput in delay-tolerant transmission mode versus SNR, with $\mathbb{E}\{|h_{I}|^2\}=-30$ dB and $\kappa = -5$ dB.}
        \label{STAR_delay_T}
    \end{center}
\end{figure}
Fig. \ref{STAR_NOMA_delay_L} plots the system throughput versus the transmit SNR for STAR-RIS-NOMA networks in the delay-limited transmission mode, with $K =5$, $\kappa = -5$ dB and  ${R_{n}}={R_{m}}=0.5$ BPCU. The system throughput curves of STAR-RIS-NOMA networks with ipSIC/pSIC is plotted according to \eqref{The system throughput of delay-limited mode}. We can observe from the figure that the system throughput of STAR-RIS-NOMA with pSIC are superior to that of STAR-RIS-OMA at high SNRs. This is due to the fact that the system throughput in the delay-limited transmission mode is affected by the outage probability. Due to the affect of residual interference, the system throughput of STAR-RIS-NOMA with ipSIC is worse than that of STAR-RIS-OMA. Hence it is important to consider the impact of ipSIC on STAR-RIS-NOMA network performance in practical scenarios. With the increasing of reflection elements, the STAR-RIS-NOMA networks are capable of providing the enhanced system throughput. This phenomenon can be explained as that the lower outage probability can be obtained by both the user $n$ and user $m$.
Furthermore, Fig. \ref{STAR_delay_T} plots the system throughput versus the transmit SNR for STAR-RIS-NOMA networks in the delay-tolerant transmission mode, with $K = 5$ and $\kappa = -5$ dB. The system throughput of STAR-RIS-NOMA networks with ipSIC/pSIC based on \eqref{The system throughput of delay-tolerant mode}.  One can observe that the system throughput of STAR-RIS-NOMA networks with pSIC outperforms that of STAR-RIS-NOMA with ipSIC and STAR-RIS-OMA.
As the number of reconfigurable elements increases, the SATR-RIS-NOMA networks is capable of achieving the enhanced system throughput. 

\section{Conclusion}\label{Conclusion}
In this paper, the STAR-RIS assisted downlink NOMA communication scenarios have been discussed in detail. More especially, we have investigated outage probability and ergodic rate of STAR-RIS-NOMA networks over Rician fading channels. The approximate expressions for outage probability of user $n$ and user $m$. Based on the asymptotic results, the diversity orders of user $n$ and user $m$ are obtained. It has been shown that the outage probability of STAR-RIS-NOMA outperforms that of STAR-RIS-OMA and conventional cooperative communication systems. Furthermore, the theoretical expressions of ergodic rate for user $n$ with pSIC and user $m$ have also been derived and the corresponding high SNR slopes were provided carefully. Numerical results indicated that the ergodic rate of user $n$ with pSIC outperforms that of orthogonal users at high SNRs. Moreover, the system throughput of STAR-RIS-NOMA has been surveyed in delay-limited and delay-tolerant modes.  From the perspective of practical applicability, STAR-RIS-NOMA is capable of satisfying more stringent quality-of-service requirements, where the user $n$ and user $m$ can be a high-date rate video streaming user and a low-date rate user, respectively. The setup of perfect CSI my bring about the overestimated performance for STAR-RIS-NOMA networks, hence our future work will consider the impact of imperfect CSI and seek efficient channel estimation methods. Another promising future research direction is to design the deployment locations of STAR-RIS to balance the number of users in reflecting and transmitting half-spaces.


\appendices
\section*{Appendix~A: Proof of Theorem \ref{Theorem1:the OP of user n with ipSIC under Rician fading channel}} \label{Appendix:A}
\renewcommand{\theequation}{A.\arabic{equation}}
\setcounter{equation}{0}

\begin{figure*}[!t]
\normalsize
\begin{align}\label{the total OP expression of user n with ipSIC}
{P_{n,ipSIC}} =& {\rm{Pr}}\left( {\frac{{{{\left| {{h_{sn}} + {\bf{h}}_{rn}^H{{\bf{\Theta}} _R}{{\bf{h}}_{sr}}} \right|}^2}{a_m}\rho }}{{{{\left| {{h_{sn}} + {\bf{h}}_{sr}^H{{\bf{\Theta}} _R}{{\bf{h}}_{rn}}} \right|}^2}{a_n}\rho  + 1}} < {\gamma _{t{h_m}}}} \right)    \nonumber \\
&+ {\rm{Pr}}\left( {\frac{{{{\left| {{h_{sn}} +{\bf{h}}_{rn}^H{{\bf{\Theta}} _R}{{\bf{h}}_{sr}}} \right|}^2}{a_n}\rho }}{{\varpi {{\left| {{h_I}} \right|}^2}\rho  + 1}} < {\gamma _{t{h_n}}},\frac{{{{\left| {{h_{sn}} + {\bf{h}}_{rn}^H{{\bf{\Theta}} _R}{{\bf{h}}_{sr}}} \right|}^2}{a_m}\rho }}{{{{\left| {{h_{sn}} + {\bf{h}}_{sr}^H{{\bf{\Theta}} _R}{{\bf{h}}_{rn}}} \right|}^2}{a_n}\rho  + 1}} > {\gamma _{t{h_m}}}} \right).
\end{align}
\hrulefill \vspace*{0pt}
\end{figure*}
The proof process starts by substituting \eqref{The SINR of the n-th user to detect the m-th user} and \eqref{The SINR of the n-th user} into \eqref{the OP of user n with ipSIC}, the outage probability of user $n$ with ipSIC can be further expressed as \eqref{the total OP expression of user n with ipSIC} at the top of next page. By invoking the coherent phase shifting and some arithmetic transformations, the expression of outage probability of user $n$ with ipSIC can be calculated as
\begin{align}\label{the further OP expression of user n with ipSIC}
{P_{n,ipSIC}} =\Pr \left[{{{\left| {{h_{sn}} + {\bf{h}}_{rn}^H{\Theta _R}{{\bf{h}}_{sr}}} \right|}^2} < \beta \left( {\varpi {{\left| {{h_I}} \right|}^2}\rho  + 1} \right)}\right]\nonumber \\
 = \Pr \left[ {\underbrace {{{\left| {\left| {{h_{sn}}} \right| + \sum\nolimits_{k = 1}^K {\left| {h_{rn}^kh_{sr}^k} \right|} } \right|}^2}}_Z < \beta \left( {\varpi {{\left| {{h_I}} \right|}^2}\rho  + 1} \right)} \right] ,
\end{align}
where $\varpi  = 1$ and $\beta  = \frac{{{\gamma _{t{h_n}}}}}{{{a_n}\rho }}$. The next emphasis is to solve the CDF of variable $Z$ on the left side of the above inequality.

Let $X = \sum\nolimits_{k = 1}^K {\left| {h_{rn}^kh_{sr}^k} \right|} $ and observe that it is difficult to calculate the PDF or CDF of $X$ from \eqref{The CDF of cascade Rician channels}. Since the cascade Rician fading channels has two characteristic features i.e.,  1) the PDF has a single maximum; and 2) The PDF has tails extending to infinity on both sides of the maximum. Hence we can apply the series of Laguerre polynomials to approximate this type of PDF. With the help of \cite[Eq. (2.76)]{Primak2004}, the PDF of $X$ can be approximated as
\begin{align}\label{the PDF of X approximated }
{f_X}\left( x \right) \approx \frac{{{x^{\varphi_{n}} }}}{{{{\phi_{n} }^{\varphi_{n}  + 1}}\Gamma \left( {\varphi_{n}  + 1} \right)}}\exp \left( { - \frac{x}{\phi_{n} }} \right),
\end{align}
where $\varphi _{n}= \frac{{{{\mu _n^2}}K}}{{\Omega _n}} - 1$ and $\phi_{n}  = \frac{{\Omega _n}}{{{\mu _n}}}$. As a further development, combining \eqref{The CDF of Rician channels} and \eqref{the PDF of X approximated }, and then applying some manipulates, the PDF of $Z$ can be given by
\begin{align}\label{the PDF of Z}
 {F_Z}\left( z \right) =& \int_0^{\sqrt z } {\frac{{{x^{\varphi_n} }{e^{ - \frac{x}{\phi_n }}}}}{{{b^{\varphi_n  + 1}}\Gamma \left( {\varphi_n  + 1} \right)}}\left[ 1 \right.}     \nonumber \\
 & \left. { - Q\left( {\sqrt {2\kappa } ,\left( {\sqrt z  - x} \right)\sqrt {\frac{{2\left( {\kappa  + 1} \right)}}{{{\alpha _{sn}}}}} } \right)} \right] dx.
\end{align}
Upon substituting ${f_{{{\left| {{h_I}} \right|}^2}}}\left( y \right) = \frac{1}{{{\Omega _{I}}}}{e^{ - \frac{y}{{{\Omega _{I}}}}}}$ and \eqref{the PDF of Z} into \eqref{the further OP expression of user n with ipSIC}, the outage probability of user $n$ with ipSIC can be expressed as
\begin{align}\label{Appendix:the OP of user n with ipSIC under Rician fading channel}
 &  {P_{n,ipSIC}} \approx \int_0^\infty  {\int_0^{\sqrt {\xi \left( {\varpi y\rho  + 1} \right)} } {\frac{{{x^{\varphi_n} }{e^{ - \frac{y}{{{\Omega _{I}}}} - \frac{x}{\phi_n }}}}}{{{\phi _n^{{\varphi _n} + 1}}\Gamma \left( {\varphi_n  + 1} \right){\Omega _{I}}}}\left\{ 1 \right.} } \nonumber \\
 & - \left. {Q\left( {\sqrt {2\kappa } ,\left[ {\sqrt {\xi \left( {\varpi y\rho  + 1} \right)}  - x} \right]\sqrt {\frac{{2\left( {\kappa  + 1} \right)}}{{{\alpha _{sn}}}}} } \right)} \right\}dxdy .
\end{align}
To calculate the definite integral in the above formula,  the Gauss-Chebyshev quadrature is employed to approximate this type of integral \cite[Eq. (8.8.4)]{Hildebrand1987introduction}. With the help of \cite[Eq. (8.8.4)]{Hildebrand1987introduction}, the definite integral can be approximated as
\begin{align}\label{Gauss-Chebyshev:the OP of user n with ipSIC under Rician fading channel}
 & {P_{n,ipSIC}} \approx \Phi \sum\limits_{u = 1}^U {{b_u}{{\left( {{x_u}{\rm{ + }}1} \right)}^{\varphi_n }}} \int_0^\infty  {{{\left( {\tilde \chi } \right)}^{{\varphi_n }  + 1}}{e^{ - \frac{{\left( {{x_u}{\rm{ + }}1} \right)\tilde \chi }}{{2{\phi_n} }}}}} \nonumber  \\
  & \times \left\{ {1 - Q\left( {\sqrt {2\kappa } ,\left[ {\tilde \chi  - \frac{{\left( {{x_u}{\rm{ + }}1} \right)\tilde \chi }}{2}} \right]\sqrt {\frac{{2\left( {\kappa  + 1} \right)}}{{{\alpha _{sn}}}}} } \right)} \right\}{e^{ - x}}dx ,
\end{align}
where ${b_u} = \frac{\pi }{{2U}}\sqrt {1 - x_u^2} $, $\Phi  = \frac{1}{{{2^{{\varphi _n}}}\phi _n^{{\varphi _n} + 1}\Gamma \left( {{\varphi _n} + 1} \right)}}$ and $\tilde \chi  = \sqrt {\xi \left( {\varpi x{\Omega _I}\rho  + 1} \right)} $. As a further advance, we use the Gauss-Laguerre quadrature to calculate the above indefinite integral \cite[Eq. (8.6.5)]{Hildebrand1987introduction}
 After some algebraic manipulations, we can obtain \eqref{the OP of user n with ipSIC under Rician fading channel}. The proof is completed.
\section*{Appendix~B: Proof of Theorem \ref{Theorem2:the OP of user m under Rician fading channel}} \label{Appendix:B}
\renewcommand{\theequation}{B.\arabic{equation}}
\setcounter{equation}{0}
Upon substituting \eqref{The SINR of the m-th user} into \eqref{the OP of user m}, the outage probability of user $m$ can be calculated as
\begin{align}\label{the expression of OP for m}
{P_m} = {\rm{Pr}}\left( {{{\left| {{\bf{h}}_{rm}^H{\Theta _T}{{\bf{h}}_{sr}}} \right|}^2} < {\tau }} \right).
\end{align}
We also use the coherent phase shifting scheme to deal with the correlated Rician fading channels. Similar to the solving processes of \eqref{the PDF of X approximated }, the outage probability of user $m$ can be further calculated as
\begin{align}\label{the expression of OP for m further expression}
{P_m} = & {\rm{Pr}}\left( {\left| {\sum\limits_{k = 1}^K {h_{rm}^kh_{sr}^k} } \right| < \sqrt \tau  } \right) \nonumber  \\
 =&  \frac{1}{{{{\left( {{\phi _m}} \right)}^{{\varphi _m}{\rm{ + }}1}}\Gamma \left( {{\varphi _m} + 1} \right)}}\int_0^{\sqrt \tau  } {{y^{{\varphi _m}}}{e^{ - \frac{y}{{{\phi _m}}}}}dy},
\end{align}
where ${\Omega _m} = {\alpha _{sr}}{\alpha _{rm}}\left\{ {{\rm{1}} - \frac{{{\pi ^{\rm{2}}}}}{{{\rm{16}}{{\left( {\kappa  + 1} \right)}^2}}}{{\left[ {{L_{\frac{1}{2}}}\left( \kappa  \right)} \right]}^{\rm{4}}}} \right\}$, ${\mu _m} = \frac{{\pi \sqrt {{\alpha _{sr}}{\alpha _{rm}}} }}{{4\kappa  + 1}}{\left[ {{L_{\frac{1}{2}}}\left( { - \kappa } \right)} \right]^2}$,
 ${\varphi _m} = \frac{{{{\left( {K{\mu _m}} \right)}^2}}}{{K{\Omega _m}}} - 1$, ${\phi _m} = \frac{{{\Omega _m}}}{{{\mu _m}}}$ and $\tau  = \frac{{{\gamma _{t{h_m}}}}}
{{\rho \left( {{a_m} - {\gamma _{t{h_m}}}{a_n}} \right)}}$.

By the virtue of \cite[Eq. (8.350.1)]{2000gradshteyn} and applying some arithmetic operations, we can obtain \eqref{the OP of user m under Rician fading channel}. The proof is completed.

\section*{Appendix~C: Proof of Corollary \ref{Corollary3:the asymptotic OP of user n with pSIC under Rician fading channel}} \label{Appendix:C}
\renewcommand{\theequation}{C.\arabic{equation}}
\setcounter{equation}{0}

Upon substituting $\varpi  = 0$ into \eqref{the further OP expression of user n with ipSIC}, the outage probability of user $n$ with pSIC can be expressed as
\begin{align}\label{AppendixC the further OP expression of user n with pSIC}
{P_{n,pSIC}} = {\text{Pr}}\left[ {\underbrace {\left| {\left| {{h_{sn}}} \right| + \sum\nolimits_{k = 1}^K {\left| {h_{sr}^kh_{rn}^k} \right|} } \right|}_V < \sqrt \beta  } \right].
\end{align}
To calculate the asymptotic outage probability, the following emphases are to solve the approximated PDF and CDF of variable $V$ at high SNRs.

Let ${X_k} = \left| {h_{sr}^kh_{rn}^k} \right|$ and applying \cite[Eq. (2.16.6.3)]{Prudnikov1986} into \eqref{The CDF of cascade Rician channels}, the Laplace transform expression of the PDF for $X_k$ can be given by
\begin{align}\label{AppendixC Laplace transform expression $X_k$}
&\mathcal{L}\left[ {{f_{{X_k}}}\left( x \right)} \right]\left( s \right) =  \sum\limits_{i = 0}^\infty  {\sum\limits_{j = 0}^\infty  {\frac{{{4^{i - j + 1}}\sqrt \pi  {\kappa ^{i + j}}{{\left( {1 + \kappa } \right)}^{2\left( {i + 1} \right)}}}}
{{{\alpha _{sr}}{\alpha _{rn}}{{\left( {i!} \right)}^2}{{\left( {j!} \right)}^2}{e^{2\kappa }}}}} }  \nonumber \\
&\times  {}_2{F_1}\left( {2i + 2,i - j + \frac{1}
{2};i + j + \frac{5}
{2};\frac{{s - 2\left( {\kappa  + 1} \right)}}
{{s + 2\left( {\kappa  + 1} \right)}}} \right)\nonumber \\
&\times  \frac{{\Gamma \left( {2i + 2} \right)\Gamma \left( {2j + 2} \right)}}
{{\Gamma \left( {i + j + \frac{5}
{2}} \right){{\left[ {s + 2\left( {\kappa  + 1} \right)} \right]}^{2i + 2}}}} ,
\end{align}
where ${}_2{F_1}\left( { \cdot , \cdot ; \cdot ; \cdot } \right)$ is the ordinary hypergeometric function \cite[Eq. (9.100)]{2000gradshteyn}.
When $s \to \infty $, i.e.,   $x \to 0$ the expression ${\frac{{s - 2\left( {\kappa  + 1} \right)}} {{s + 2\left( {\kappa  + 1} \right)}}}$ is approximate to 1 and the expression $\left[ {s + 2\left( {\kappa  + 1} \right)} \right]$ is dominated by $s$. Hence the Laplace transform expression is finally derived as
\begin{align}\label{AppendixC Laplace transform expression $X_k$  1}
&\mathcal{L}\left[ {{f_{{X_k}}}\left( x \right)} \right]\left( s \right) = \sum\limits_{i = 0}^\infty  {\sum\limits_{j = 0}^\infty  {\frac{{{4^{i - j + 1}}\sqrt \pi  {\kappa ^{i + j}}{{\left( {1 + \kappa } \right)}^{2\left( {i + 1} \right)}}}}
{{{\alpha _{sr}}{\alpha _{rn}}{{\left( {i!} \right)}^2}{{\left( {j!} \right)}^2}{e^{2\kappa }}}}} } \nonumber \\
 & \times  {}_2{F_1}\left( {2i + 2,i - j + \frac{1}
{2};i + j + \frac{5}
{2};1} \right)\frac{{\Gamma \left( {2i + 2} \right)\Gamma \left( {2j + 2} \right)}}
{{\Gamma \left( {i + j + \frac{5}
{2}} \right){s^{2i + 2}}}}.
\end{align}
As a further advance, the Laplace transform expression of the PDF of $X$ by invoking the convolution theorem can be given by
\begin{align}\label{AppendixC Laplace transform expression $X$  2}
\mathcal{L}\left[ {{f_X}\left( x \right)} \right]\left( s \right)  = {\left[ {\sum\limits_{i = 0}^\infty  {\sum\limits_{j = 0}^\infty  {\frac{{\theta \left( {i,j} \right)}}
{{{\alpha _{sr}}{\alpha _{rn}}}}} } {s^{ - 2i - 2}}} \right]^K},
\end{align}
where $\theta \left( {i,j} \right) = \sum\limits_{i = 0}^\infty  {\sum\limits_{j = 0}^\infty  {\frac{{{4^{i - j + 1}}\sqrt \pi  {\kappa ^{i + j}}{{\left( {1 + \kappa } \right)}^{2\left( {i + 1} \right)}}\Gamma \left( {2i + 2} \right)}}{{{{\left( {i!} \right)}^2}{{\left( {j!} \right)}^2}{e^{2\kappa }}\Gamma {{\left( {2j + 2} \right)}^{ - 1}}\Gamma \left( {i + j + \frac{5}{2}} \right)}}} } {}_2{F_1}\left( {2i } \right. \nonumber \\
 \left. {+ 2,i - j + \frac{1}{2};i + j + \frac{5}{2};1} \right) $.
We only take the first item of the two series in the above equation. The Laplace transform expression of the PDF of $X$ at high SNRs can be finally approximated  as
\begin{align}\label{AppendixC Laplace transform expression $X$  3}
\mathcal{L}\left[ {{f_X}\left( x \right)} \right]\left( s \right) \approx {\left( {\frac{\vartheta}
{{{\alpha _{sr}}{\alpha _{rn}}}}} \right)^K}{s^{ - 2K}},
\end{align}
where $\vartheta  = \theta \left( {0,0} \right) = {}_2{F_1}\left( {2,\frac{1}
{2};\frac{5}
{2};1} \right)\frac{{16{{\left( {1 + \kappa } \right)}^2}}}
{{3{e^{2\kappa }}}}$.

On the basis of \eqref{The PDF of Rician channels}, the PDF of $\left| {{h_{sn}}} \right|$ can be further expressed in the form of series as
$ {f_{\left| {{h_{sn}}} \right|}}(x) = \frac{{2x(\kappa  + 1)}}
{{{\alpha _{sn}}{e^\kappa }}} + o\left( {{x^2}} \right)$, where $o\left(  \cdot  \right)$ is the little-O notation, and $o\left( {{x^2}} \right)$ denotes a function which is asymptotically smaller than ${{x^2}}$.
Hence the Laplace transform expression of the PDF for $\left| {{h_{sn}}} \right|$ can be approximated as
\begin{align}\label{AppendixC Laplace transform expression $X$  5}
\mathcal{L}\left[ {{f_{\left| {{h_{sn}}} \right|}}\left( x \right)} \right]\left( s \right) \approx \frac{{2{s^{ - 2}}(\kappa  + 1)}}
{{{\alpha _{sn}}{e^\kappa }}}.
\end{align}

Combining \eqref{AppendixC Laplace transform expression $X$  3} and \eqref{AppendixC Laplace transform expression $X$  5}, the Laplace transform expression of the PDF for the variable $V$ at high SNRs can be given by
\begin{align}\label{AppendixC Laplace transform expression $X$  6}
\mathcal{L}\left[ {{f_V}\left( x \right)} \right]\left( s \right) \approx \frac{{2{\vartheta ^K}(\kappa  + 1)}}
{{{\alpha _{sn}}{{\left( {{\alpha _{sr}}{\alpha _{rn}}} \right)}^K}{e^\kappa }}}{s^{ - 2K - 2}}.
\end{align}
By further applying the inverse Laplace transform in the above equation, the PDF of $V$ at high SNRs can be given by
\begin{align}\label{AppendixC Laplace transform expression $X$ 7}
{f_V}\left( x \right) \approx \frac{{2{\vartheta ^K}(\kappa  + 1){x^{2K + 1}}}}
{{{\alpha _{sn}}{{\left( {{\alpha _{sr}}{\alpha _{rn}}} \right)}^K}{e^\kappa }\left( {2K + 1} \right)!}}.
\end{align}

Upon substituting \eqref{AppendixC Laplace transform expression $X$ 7} into \eqref{AppendixC the further OP expression of user n with pSIC}, we can obtain \eqref{The asymptotic OP of user n with pSIC under Rician fading channel}. The proof is completed.

\section*{Appendix~D: Proof of Theorem \ref{Theorem4:the ergodic rate of user m under Rician fading channel}} \label{Appendix:D}
\renewcommand{\theequation}{D.\arabic{equation}}
\setcounter{equation}{0}

Based on \eqref{The SINR of the m-th user}, the ergodic rate of user $m$ for STAR-RIS-NOMA networks can be expressed as
\begin{align}\label{the expression of OP for m further expression}
{R_{m,erg}} =  & \mathbb{E}\left[ {\log \left( {1 + \underbrace {\frac{{{{\left| {{\bf{h}}_{rm}^H{{\bf{\Theta}} _T}{{\bf{h}}_{sr}}} \right|}^2}\rho {a_m}}}{{{{\left| {{\bf{h}}_{rm}^H{\Theta _T}{{\bf{h}}_{sr}}} \right|}^2}\rho {a_n} + 1}}}_{{X_2}}} \right)} \right] \nonumber \\
 = & \frac{1}{{\ln 2}}\int_0^\infty  {\frac{{1 - {F_{{X_2}}}\left( x \right)}}{{1 + x}}} dx.
\end{align}
Using the coherent phase shifting scheme and with the help of \eqref{the OP of the n-th user with pSIC under Rician fading channel}, the CDF of $X_2$ is given by
\begin{align}\label{the CDF of X2}
{F_{{X_2}}}\left( x \right) = \frac{1}{{\Gamma \left( {{\varphi _m} + 1} \right)}}\gamma \left( {{\varphi _m} + 1,\frac{1}{{{\phi _m}}}\sqrt {\frac{x}{{\rho \left( {{a_m} - x{a_n}} \right)}}} } \right),
\end{align}
where ${a_m} > x{a_n}$. Upon substituting \eqref{the CDF of X2} into \eqref{the expression of OP for m further expression},  the ergodic rate of user $m$ can be approximated as
\begin{align}\label{the CDF of X2}
 &{R_{m,erg}} \approx \frac{1}{{\ln 2}}\int_0^{\frac{{{a_m}}}{{{a_n}}}} {\frac{1}{{1 + x}}} \nonumber  \\
 & - \frac{1}{{\left( {1 + x} \right)\Gamma \left( {{\varphi _m} + 1} \right)}}\gamma \left( {{\varphi _m} + 1,\frac{1}{{{\phi _m}}}\sqrt {\frac{x}{{\rho \left( {{a_m} - x{a_n}} \right)}}} } \right)dx.
\end{align}
 Applying Gauss-Chebyshev quadrature into above equation, we can obtain \eqref{the ergodic rate of user m under Rician fading channel}. The proof is completed.

\bibliographystyle{IEEEtran}
\bibliography{mybib}




%

\end{document}